\documentclass[5p,preprint]{elsarticle}

\usepackage[utf8]{inputenc}
\usepackage[T1]{fontenc}
\usepackage{amsmath,amssymb,amsthm}
\allowdisplaybreaks 
\makeatletter
\let\cl@part\relax
\makeatother

\usepackage{wrapfig}

\usepackage[colorlinks=true,linkcolor=blue,citecolor=blue]{hyperref}
\usepackage[nameinlink,capitalize,noabbrev]{cleveref}
\usepackage{enumitem}

\usepackage{mathtools}

\newtheorem{thm}{Theorem}
\newtheorem{lem}[thm]{Lemma}

\newtheorem{prop}[thm]{Proposition}
\newtheorem{defn}[thm]{Definition}
\newtheorem{rem}[thm]{Remark}
\newtheorem{example}[thm]{Example}
\newtheorem{assumption}[thm]{Assumption}

\Crefname{thm}{Theorem}{Theorems}
\Crefname{lem}{Lemma}{Lemmata}
\Crefname{prop}{Proposition}{Propositions}
\Crefname{assumption}{Assumption}{Assumptions}
\crefname{example}{Example}{Examples}
\Crefname{defn}{Definition}{Definitions}
\Crefname{cor}{Corollary}{Corollaries}
\Crefname{figure}{Fig.}{Figs.}

\newcommand{\R}{\mathbb{R}}
\newcommand{\N}{\mathbb{N}}

\newcommand\ddt{\frac{\text{d}}{\text{d} t}}
\newcommand{\rbl}{\left (}
\newcommand{\rbr}{\right )}

\newcommand{\al}{\left \langle}
\newcommand{\ar}{\right \rangle}

\newcommand{\nl}{\left\|}
\newcommand{\nr}{\right\|}
\newcommand{\cbl}{\left\lbrace }
\newcommand{\cbr}{\right\rbrace }
\newcommand{\setdef}[2]{\cbl\ #1\ \left|\ \vphantom{#1} #2\ \right.\cbr}
\newcommand{\Norm}[2][ ]{\nl #2 \nr_{#1}}
\newcommand{\SNorm}[1]{\Norm[\infty]{#1}}
\newcommand{\dd}[2][ ]{\tfrac{\text{\normalfont d}#1}{\text{\normalfont d}#2}}

\usepackage{xcolor}
\usepackage{comment}

\usepackage{tikz}
\usepackage{pgfmath}
\usepackage{pgfplots}
\pgfplotsset{compat=1.18}
\usetikzlibrary{intersections}  \usepgfplotslibrary{fillbetween}
\usetikzlibrary{patterns}

\begin{document}

\begin{frontmatter}

\title{A model-free approach to control barrier functions for higher-order systems\tnoteref{funding}}

\tnotetext[funding]{This work was funded by the Deutsche Forschungsgemeinschaft (DFG, German Research Foundation)~-- Project-IDs 471539468, 544702565.}
\author[Ilmenau]{Lukas Lanza} \ead{lukas.lanza@tu-ilmenau.de}    
\author[ICL]{Johannes K\"ohler} \ead{j.kohler@imperial.ac.uk}              
\author[Ilmenau,Paderborn]{Dario Dennst\"adt} \ead{dario.dennstaedt@tu-ilmenau.de} 
\author[Halle]{Thomas Berger}\ead{thomas.berger@mathematik.uni-halle.de}
\author[Ilmenau]{Karl Worthmann} \ead{karl.worthmann@tu-ilmenau.de}

\address[Ilmenau]{Optimization-based Control Group, TU Ilmenau, 98693 Ilmenau, Germany} 
\address[ICL]{Department of Mechanical Engineering, Imperial College London, London, UK}
\address[Halle]{Institut f\"ur Mathematik, Martin-Luther-Universit\"at Halle-Wittenberg, 06120~Halle (Saale), Germany} 
\address[Paderborn]{Institut f\"ur Mathematik, Universität Paderborn, Warburger Str. 100, 33098 Paderborn, Germany} 
        
\begin{keyword}                          
Control Barrier Functions, Model-free design, Funnel control, Nonlinear output feedback 
\end{keyword}

\begin{abstract}    
    Control barrier functions (CBFs) are a widely applied modular tool to ensure safe operation of nonlinear dynamical control systems. However, for their construction accurate knowledge of the system dynamics is typically needed. This requirement was recently alleviated for relative-degree-one systems using techniques from prescribed performance control (PPC) or funnel control (FC). This article extends the model-free CBF design to nonlinear systems of arbitrary relative degree. Moreover, we show with a simple example that a straightforward extension of existing results for relative-degree-one systems fails. Instead, we utilize novel techniques from funnel control to characterize a subset of the controls satisfying a CBF condition without requiring a dynamic model or state measurement. Finally, we demonstrate the applicability of our results on a seven degrees of freedom robotic manipulator with relative degree two.
\end{abstract}

\end{frontmatter}
               
\section{Introduction}
Satisfaction of safety-critical constraints during runtime is a key requirement in systems and control, which becomes particularly challenging when dealing with  
nonlinear and uncertain systems.  
Recent years have seen the development of a diverse set of control techniques designed to address these challenges~\cite{wabersich2023data,brunke2022safe}.
These approaches can often be  
implemented as modular safety filters, only adjusting desired control strategies as much as needed to ensure safety.
Despite these advancements, the effectiveness of such approaches commonly depends on the availability of accurate models for the system dynamics. 

Model-free control strategies such as PID controllers~\cite{han2009pid} or model-free reinforcement learning~\cite{sutton1998reinforcement} avoid reliance on complex models and are often scalable. 

However, these approaches generally lack rigorous safety guarantees.  
This paper proposes a novel approach by integrating  
safety techniques, specifically control barrier functions (CBFs), with a model‑free nonlinear controller, namely funnel control. The resulting framework provides a model‑free, modular safety mechanism for a general class of nonlinear dynamical systems.

\textbf{Related work}.
CBFs, initially introduced in~\cite{wieland2007constructive}, have proven effective in maintaining safety-critical constraints, and their intuitive formulation and modularity contributed to their popularity 
in robotics~\cite{ames2019control,ames2016control}. CBF-based controllers compute the control input by solving a quadratic program (QP) enforcing a lower bound on the derivative of the barrier function, thereby ensuring that the safe set is positively invariant and  asymptotically stable. 
CBFs are classically restricted to relative‑degree‑one constraints, whereas high‑order CBFs (HOCBFs) handle higher relative degrees by enforcing derivative constraints up to the required order~\cite{ames2019control,xiao2021high}. 
However,  
CBF implementations require a model for nonlinear dynamics, access to the state (or its estimate), and exhibit a complexity that scales with system dimensionality.
Literature has sought to address some of these limitations by studying robustness to bounded disturbances~\cite{jankovic2018robust,gurriet2018towards}, sector-bounded nonlinearities~\cite{buch2021robust}, and state-estimation uncertainties~\cite{dean2021guaranteeing}. 
Further insights have emerged from the analysis of input-to-state safety properties~\cite{alan2023control,kolathaya2018input}, and several approaches aim to extract necessary model information from data~\cite{taylor2020learning,dhiman2021control} -- albeit at the cost of increased implementation complexity. 
To counteract this complexity, \cite{cohen2024safety} explores the use of reduced-order models. For special classes of reduced-order kinematic systems, \cite{molnar2021model} proposes a model-free CBF-based controller that requires only minimal information about the system dynamics. 

Funnel control (FC) and prescribed performance control (PPC) are two closely related nonlinear control design frameworks that enforce predefined bounds on the tracking error w.r.t.\ a reference signal.
FC imposes time-varying performance bounds using error-dependent gains, guaranteeing that the tracking error remains within a prespecified \textit{performance funnel}~\cite{IlchRyan02b,berger2018funnel}.
In PPC, algebraic transformations of the tracking error are used to enforce prescribed transient and steady-state specifications~\cite{bechlioulis2008robust,kostarigka2011adaptive,bechlioulis2014low}.
Unlike classical CBF-based techniques, both are model-free in the sense that, although the systems considered must satisfy structural properties (such as a well-defined relative degree, stability of the internal dynamics and a high-gain property), 
knowledge 
of the 
model parameters is 
not required. 
Consequently, the same feedback design can be applied to a whole system class. However, FC and PPC lack the modularity of CBFs, which can act as an add‑on filter that minimally adjusts a desired control input while preserving safety alongside the primary control objective. 

Recently, the interplay between reciprocal CBFs and PPC has been analyzed in~\cite{namerikawa2024equivalence}, demonstrating that the auxiliary error variables in PPC can define a reciprocal CBF, and, conversely, that the PPC feedback law can be recovered from a reciprocal CBF formulation. 
In our recent work~\cite{LanzKoehl_CBFandFunnel}, we followed a similar approach, but employed FC and zeroing CBFs instead of PPC and reciprocal ones, to obtain  
a model-free CBF. 
A related zeroing CBF-based funnel controller has also been developed in~\cite{verginis2022funnel}. 
Notably, the approaches in~\cite{namerikawa2024equivalence,verginis2022funnel,LanzKoehl_CBFandFunnel}
are all limited to relative-degree-one systems, which excludes most applications. 
In the present paper we extend these results to systems of arbitrary relative degree.

\textbf{Contribution}.
In this work, we extend the model-free CBF design from~\cite{LanzKoehl_CBFandFunnel}.
Specifically, in~\cite{LanzKoehl_CBFandFunnel} we characterized a subset of inputs that satisfy the CBF condition for systems of relative degree one.
In this paper, we generalize this result to systems of arbitrary order.
One might expect that this approach can be generalized to systems with a higher relative degree using the well established concepts of HOCBFs~\cite{xiao2021high}. 
However, as we show in~\Cref{Ex:Failure},  this is not the case. 
This motivates our alternative approach: Essentially, starting with the output tracking error, we define auxiliary variables by an iteration similar to HOCBFs~\cite{xiao2021high}
-- the main difference being that these variables are vector-valued and shift the higher-order Lie-derivatives used in the literature to the construction of the auxiliary variables. 
Then, we construct a CBF of order one using these auxiliary variables. 
This enables the definition of model-free CBF-based control laws for a large class of higher-order systems in \Cref{Sec:characterizing_a_subset_of_feasible_controls}. 
The presented approach allows for the satisfaction of output constraints in a model-free fashion. 
We validate the approach by numerical simulations  
involving a seven degree of freedom manipulator.

\textbf{Nomenclature.}
For~$N \in \N$, we set $[N] = \{1,\ldots,N\}$.
For $x,y \in \R^n$, we use $\langle x, y \rangle = x ^\top y$ and the Euclidean norm is denoted by $\|x\| = \sqrt{\langle x,x \rangle}$. 
For $V \subseteq \R^m$, we denote by $C^k(V;\R^n)$ the set of $k$-times continuously differentiable functions ${f: V \to \R^n}$, and $C(V;\R^n):=C^0(V;\R^n)$.
For~$f:\R^m \to \R^n$,~$g:\R^k \to \R^m$ and~$x\in\R^k$, we use $(f \circ g)(x) := f(g(x))$.
For an interval~$I \subseteq \R$, $L^\infty(I; \R^{n})$ is the Lebesgue space of measurable essentially bounded functions $f : I\to\R^n$ with norm $\|f \|_{\infty} = \text{esssup}_{t \in I} \|f(t)\|$; for~$k \in \N$, $W^{k,\infty}(I;\R^{n})$ is the Sobolev space of all functions $f:I\to\R^n$ with $k$-th weak derivative $f^{(k)}$ and ${f,f^{(1)},\ldots,f^{(k)}\in L^\infty(I; \R^{n})}$.
${\mathcal{K} := \{ \alpha \in C(\R_{\ge 0};\R) \, | \, \alpha(0) = 0, \alpha\ \text{strictly \ increasing}\}}$,
$\mathcal{K}_\infty^e := \{ \alpha \in C(\R;\R) \, | \, \alpha(0) = 0, \alpha\ \rm{strictly \ increasing},  \\ \lim_{a\to \pm \infty}\alpha(a) = \pm \infty \}$.
For $b: \R_{\ge 0} \times \R^n \to \R^p$ differentiable, $(L_f b)(\cdot)$ denotes the Lie derivative of~$b$ along a vector field~$f:\R^n\to\R^n$, i.e., $(L_f b)(t,\zeta) = {\nabla_\zeta b(t,\zeta)^\top f(\zeta) \in\R^p}$, and, for sufficiently smooth $b$ and $f$, $L_f^k b := L_f(L_f^{k-1}b)$ for $k\ge 1$ with $L_f^0 b := b$. For $g = [g_1,\ldots,g_m]:\R^n\to\R^{n\times m}$, we set $L_g b = [L_{g_1} b,\ldots, L_{g_m} b]$, and use $\partial_t b(t,\zeta) := \tfrac{\partial}{\partial t}b(t,\zeta)$.

\section{System class and control objective}
In this section, we introduce the class of dynamical systems under consideration and state the control objective.
We further discuss the concept of high-order control barrier functions in order to motivate our approach.

We consider multi-input multi-output nonlinear systems in input-output form
\begin{subequations} \label{eq:System_r_InputOoutputForm}
    \begin{align}
        \dot \xi_i(t) &= \xi_{i+1}(t), \quad i \in [r-1], \nonumber \\ 
        \dot \xi_r(t) &= f(\xi(t),\eta(t)) + g(\xi(t),\eta(t)) u(t), \label{eq:r_th_order_system}   \\
        \dot \eta(t) &= Q(\xi(t),\eta(t)), \label{eq:internal_dynamics} 
    \end{align}
\end{subequations}
where $r\in\N$ is the order of the system, $\xi(t) = \big(\xi_1(t)^\top,\ldots,\xi_r(t)^\top\big)^\top\in\R^{rm}$ are the measured components of the state, 
$\eta(t)\in\mathbb{R}^q$ is the internal state, and $y(t) = \xi_1(t) \in \R^m$ is the output at time~$t \ge 0$.
The functions $f:\R^{rm}\times\R^{q}\to\R^m$, $g:\R^{rm}\times\R^{q}\to\R^{m\times m}$, $Q:\R^{rm}\times\R^{q}\to\R^{q}$ are assumed to be locally Lip\-schitz continuous.
The system can be influenced via the input~$u(t) \in \R^m$ at time $t\ge 0$.
The nonlinear functions $f$, $g$, and $Q$ are not assumed to be known and are not available for controller design. We only assume the availability of the instantaneous measurements of the output and its derivatives up to order $r-1$, i.e., $\xi(t)$ is available to the controller at 
time~$t$. Note that the dimension~$m \in \N$ of the input and the output coincide and are known, while the dimension~$q\in\N$ of the internal state~$\eta$ is unknown.  
For a concise notation in the later analysis, we define
\begin{equation} \label{eq:FandG}
    \begin{aligned}
            F : \R^{rm} \times \R^{q} & \to \R^{{rm+q}}, \\
        (\xi, \eta)  &\mapsto (
            \xi_2^\top,  \ldots,  \xi_r^\top,  f(\xi,\eta)^\top, Q(\xi,\eta)
        )^\top, \\
        G : \R^{rm} \times \R^{q} & \to \R^{(rm+q) \times m}, \\
        (\xi, \eta)  &\mapsto (
            0,  \ldots,  0,  g(\xi,\eta)^\top, 0
        )^\top .  
    \end{aligned}
\end{equation}
Then, system~\eqref{eq:System_r_InputOoutputForm} is equivalent to
\begin{equation}
    \ddt \begin{pmatrix}
        \xi(t) \\ \eta(t)
    \end{pmatrix} = F(\xi(t),\eta(t)) + G(\xi(t),\eta(t)) u(t).
\end{equation}
\begin{rem}
In this paper, we consider systems in input-output normal form~\eqref{eq:System_r_InputOoutputForm}.
It is also possible to consider nonlinear dynamics in state-space representation
\begin{align*}
    \dot x(t) &= \tilde F(x(t)) + \tilde G(x(t))  u(t), \\
    y(t) &= \tilde H(x(t))
\end{align*}
with unknown functions~$\tilde F,\tilde G, \tilde H$ of appropriate dimensions.
If such a system satisfies the conditions proposed in~\cite[Cor.~5.6]{byrnes1991asymptotic}, then there exists a state-space transformation to input-output form~\eqref{eq:System_r_InputOoutputForm}.~\hfill$\diamond$
\end{rem}
In the upcoming analysis, we ask a system~\eqref{eq:System_r_InputOoutputForm} to satisfy the following two assumptions.
\begin{assumption} \label[assumption]{ass:ID_BIBS}
    The internal dynamics~\eqref{eq:internal_dynamics} are bounded-input bounded-state stable, i.e., for all~$c_0 > 0$, there exists~$\bar q > 0$ such that, for all $\xi \in L^\infty(\R_{\ge 0};\R^{rm})$ and all $\eta^0\in \R^q$, we have
    \begin{equation*}
        \| \eta^0\| + \|\xi\|_\infty \le c_0 \implies \| \eta(\cdot;0,\eta^0,\xi) \|_\infty \le \bar q,
    \end{equation*}
    where $\eta(\cdot; 0,\eta^0,\xi)$ denotes the unique global solution of~\eqref{eq:internal_dynamics} with $\eta(0) = \eta^0$.
\end{assumption}
\Cref{ass:ID_BIBS} ensures that the unmeasured internal state~$\eta$ evolves within a compact, yet unknown set for bounded~$\xi$.

The following structural property links the order~$r \in \N$ of system~\eqref{eq:System_r_InputOoutputForm} to the concept of relative degree, cf.~\cite{byrnes1991asymptotic}.
\begin{assumption} \label[assumption]{Ass:g_positive}
 The map~$g: \R^{rm} \times \R^{q} \to \R^{m \times m}$ in~\eqref{eq:r_th_order_system} is pointwise positive definite, i.e., $\langle z, g(\xi,\eta) z \rangle > 0$ for all~$(\xi,\eta) \in \R^{rm} \times \R^{q}$ and all~$z \in \R^{m} \setminus \{0\}$.   
\end{assumption}

We do not assume symmetry of~$g$.
However, for many mechanical systems, the input distribution~$g$ is related to an inertia matrix, which is positive definite and symmetric by construction, see, e.g.,~\cite{spong1994partial}.
Thus, many mechanical systems satisfy \Cref{Ass:g_positive}, see also the example in \Cref{Sec:numerical_example}.
Note that, instead of positive definiteness of~$g$, 
negative definiteness may be assumed instead, which will only change the sign in the feedback law derived later.

In the following, we call a (locally) absolutely continuous function $(\xi,\eta): [0, \omega) \to \R^{rm+q}$, $\omega\in (0,\infty]$, with $\xi(0) =\xi^0$, $\eta(0)=\eta^0$ a solution (in the sense of Carathéodory) to~\eqref{eq:System_r_InputOoutputForm}, if it satisfies~\eqref{eq:System_r_InputOoutputForm} for almost all~${t \in [0,\omega)}$. A solution~${(\xi,\eta)}$ is said to be maximal, if it has no right extension that is also a solution; a maximal solution is global, if $\omega=\infty$ holds.

\subsection{Control objective}

We aim to design a feedback control law based on Control Barrier Functions (CBFs) such that the output~$y$ of system~\eqref{eq:System_r_InputOoutputForm} follows a given reference trajectory~$y_{\rm ref}$, satisfying prescribed (potentially time varying) 
safety requirements encoded by the performance function~$\psi$. 
In particular, we aim to guarantee that the output~$y$ satisfies 
\begin{equation} \label{eq:ErrorGuarantee}
    \forall \, t \ge 0 : \ \|y(t) - y_{\mathrm{ref}}(t)\| \le \psi(t).
\end{equation}
This is equivalent to ensuring that, for all~$t \ge 0$, the partial state $\xi_1(t)$ evolves within the safe set
\begin{equation} \label{eq:SafeSet}
\!\! \!   \mathcal{C} :=  \left\{ (t,\xi) \in \R_{\ge 0} \!\times\! \R^{rm} \, | \, \psi(t) \!-\!  \|\xi_1 \!-\! y_{\mathrm{ref}}(t)\| \ge 0 \right\}. 
\end{equation}
The performance requirement~\eqref{eq:ErrorGuarantee} is illustrated in \Cref{Fig:funnel_and_error}. 
Note that the safe set is independent of the evolution of the internal state~$\eta$, i.e., only output (and derivative) information is used to define the safe set. 
\begin{figure}[htbp]
    \centering
    \begin{tikzpicture}[>=stealth,scale=0.6]
    \draw[->] (0,0) --node[below]{$t$} (6.3,0);
    \draw[->] (0,0) -- node[rotate=90,left=-0.1em,above=0.1em]{$\|y-y_{\rm ref}\|$} (0,3) ;
    \draw[scale=2, dashed, domain=0:3, samples=100 , variable=\x, black] plot ({\x}, {1.2*exp(-\x)+0.05});
    \draw[scale=2, domain=0:3, samples=100 , variable=\x, black] plot ({\x}, {abs(cos(deg(0.5*\x))*cos(deg(3*\x))*exp(-0.7*\x)});
    \draw (1.4,1.4) -- node[above=0.7em,right=0.1em]{\footnotesize{$\psi$}} (1.7,1.8);
    \end{tikzpicture}
    \caption{Evolution of the tracking error~$y-y_{\rm ref}$ within the performance boundary~$\psi$.}
    \label{Fig:funnel_and_error}
\end{figure}

We consider the following conditions for the reference trajectory and the performance functions.
\begin{assumption}
\label[assumption]{Ass:reference_and_funnel}
    The reference trajectory~$y_{\rm ref}$ belongs to the set
    \begin{equation*}
    Y_{\rm ref} := 
    W^{r,\infty}(\R_{\ge 0}; \R^m). 
\end{equation*}
The performance function~$\psi$ is an element of the set
    \begin{equation*}
        \Psi := \left\{ \psi \in W^{1,\infty}(\R_{\ge 0};\R) \ \left|\ \inf_{s \ge 0} \psi(s) > 0 \right\}\right. .
    \end{equation*}
\end{assumption}

\begin{rem}
The considered problem is particularly meaningful in the context of motion planning problem in robotics. Specifically, in this case a collision-free reference is designed and the performance function $\psi$ can be chosen based on the distance of this reference to obstacles, i.e., satisfaction of~\eqref{eq:ErrorGuarantee} implies collision avoidance.
Algorithms to generate corresponding representations for complex cluttered environments are for example found in~\cite{wullt2025probabilistic}.
Lastly, the modular CBF-based control design will allow for additional excitation, e.g., to learn model parameters. 
These points will be illustrated with a rigid manipulator example in Section~\ref{Sec:numerical_example}.
~\hfill $\diamond$
\end{rem}

\subsection{High-order Control Barrier Functions}
To motivate our approach,  we briefly recall the concept of high-order CBFs (HOCBFs).  
Consider a system
\begin{equation} \label{eq:System_r}
    \begin{aligned}
         \dot x(t) &=  \tilde F(x(t)) +  \tilde G(x(t)) u(t), \quad x(0) = x^0 \!\in \! \R^n,
    \end{aligned}
\end{equation}
and a barrier function $B \in C^r( \R_{\ge 0} \times \R^n; \R)$. 
Suppose that the system~\eqref{eq:System_r} has 
relative degree~
$r\in\mathbb{N}$ w.r.t.\ 
$B(t,x)$, i.e., for all $(t,x)\in\mathbb{R}\times\mathbb{R}^n$:
\begin{align}\label{eq:rel_degree_condition}
    L_{\tilde G}^{} L_{\tilde F}^{{i-1}} B(t,x)=0~\forall\, i\in [r-1],~L_{\tilde G}^{} L_{\tilde F}^{r-1} B(t,x)\neq 0.
\end{align}
In virtue of~\cite{xiao2021high,xu2018constrained}, the following auxiliary functions are defined recursively for~$r \le n $ and 
$i \in [r-1]$ by
\begin{align}
\label{eq:FunctionsHOCBF}
\gamma_i(t,x)= \partial_t \gamma_{i-1}(t,x) + (L_{\tilde F} \gamma_{i-1})(t,x)+ \alpha_i(\gamma_{i-1}(t,x))
\end{align}
with~$\gamma_0(t,x) := B(t,x)$ for $t\ge 0$, $x\in\R^n$, and $\alpha_i \in \mathcal{K}_\infty^e$.
Note that $\gamma_i : \R_{\ge 0} \times \R^n \to \R$ are scalar-valued functions.
Given~$\gamma_i$, corresponding superlevel sets are defined by
\begin{equation} \label{eq:SuperlevelSets}
    \! C_i \!:=\! \{ (t, x) \in \R_{\ge 0} \times \R^n \, | \, \gamma_{i-1}(t,x) \ge 0 \}, \ i \in [r].
\end{equation}
Invoking the above functions and sets, the following definition is given in \cite[Def.~8]{xiao2021high}.

\begin{defn}[High-order CBF] \label[defn]{Def:HOCBF}
    Let~$C_i$ be defined as in~\eqref{eq:SuperlevelSets} and~$\gamma_i$ be given by~\eqref{eq:FunctionsHOCBF} 
    for $i \in [r-1]$.
    A sufficiently smooth function $B: \R_{\ge 0} \times \R^n \to \R$ is a candidate high-order CBF of order~$r$ for system~\eqref{eq:System_r}, if there exist sufficiently smooth~$\alpha_1,\ldots,\alpha_r \in \mathcal{K}_\infty^e$ such that
    \begin{equation} \label{eq:CriterionCBF}
    \begin{aligned}
        \sup_{{u\in\R^m}} \Big[ (L_{\tilde F}^r B)(t,x) + (L_{\tilde G}^{} L_{\tilde F}^{r-1}B)(t,x) u + \tfrac{\partial^r B(t,x)}{\partial t^r} \\
         + O(B(t,x)) + \alpha_r(\gamma_{r-1}(t,x)) \Big] \ge 0 
    \end{aligned}
    \end{equation}
    for all $(t,x) \in \cap_{i=1}^r C_i$, where
    \begin{equation*}
    \begin{aligned}
        O(B(t,x)) := \sum_{i=1}^{r-1} \Big[ L_{\tilde F}^i(\alpha_{r-i} \circ \gamma_{r-i-1})(t,x) \\ + \frac{\partial^i (\alpha_{r-i} \circ \gamma_{r-i-1})(t,x)}{\partial t^i} \Big].
    \end{aligned}
    \end{equation*}
\end{defn}
Given a HOCBF~$B$, the set of CBF-based controls is 
given by
\begin{equation} \label{eq:K_HOCBF}
    \begin{aligned}
     K_{\rm HO}(t,x) \!:=\! \Big\{ u \in {\R^m} \, | \, L_{\tilde F}^r B(t,x) \!+\! (L_{\tilde G^{}} L_{\tilde F}^{r-1} B)(t,x) u \\ + \tfrac{\partial^r B(t,x)}{\partial t^r} + O(B(t,x)) + \alpha_r(\gamma_{r-1}(t,x)) \ge 0 \Big\}.
    \end{aligned}
\end{equation}

The following example shows, why a straightforward extension of the ideas from our previous work~\cite{LanzKoehl_CBFandFunnel} to  higher-order systems is not expedient in general.
\begin{example} \label[example]{Ex:Failure}
Consider a controlled double integrator
\begin{align*}
    \dot z(t) &= \begin{pmatrix}
        z_2(t) \\ 0
    \end{pmatrix} + \begin{bmatrix}
        0 \\1
    \end{bmatrix} u(t), \quad z(0) = z^0 \in \R^2, 
\end{align*}
thus, $ F(z)=(z_2,0)^\top$, and $ G(z) = (0,1)^\top$. 
Following the results in~\cite{LanzKoehl_CBFandFunnel}, with the aim to keep the output $y(t) = z_1(t)$ within the interval $(-1,1)$, let $B(t,z) = \tfrac{1}{2}(1 - z_1^2)$.
Note that, in this particular case, $\partial_t B(t,z) = 0$. In order to verify~\eqref{eq:CriterionCBF}, let~$\alpha_i \in \mathcal{K}_\infty^e$, $i \in [2]$, such that $\alpha_1$ is continuously differentiable.  Following the construction in~\eqref{eq:FunctionsHOCBF}, we obtain $\gamma_0(t,z) = B(t,z)$ and 
\begin{align*}
    \gamma_1(t,z) &= (L_F \gamma_0)(t,z) + \alpha_1(\gamma_0(t,z))\\
    &=  -  z_1 z_2 + \alpha_1(\tfrac{1}{2}(1 - z_1^2)).
\end{align*}
Furthermore, we obtain for the quantities in~\eqref{eq:CriterionCBF}
\begin{align*}
O(B(t,z)) &= L_F (\alpha_1\circ \gamma_0)(t,z) + \partial_t(\alpha_1\circ \gamma_0)(t,z)\\
&=\alpha_1'(B(t,z)) \cdot [(L_F   +  \partial_t)B](t,z) \\
&= -\alpha_1'(B(t,z)) z_1 z_2,\\
    L_F^2B(t,z) &= -z_2^2, \\
    L_G L_FB(t,z) &= -  z_1.
\end{align*}
Thus, we calculate
\begin{align*}
    & L_{ F}^2 B(t,z) + (L_{ G} L_F B)(t,z) u + \partial_t^2 B(t,z) \\
    &\quad + O(B(t,z)) + \alpha_2(\gamma_1(t,z)) \\
    &= - z_2^2 -  z_1 u -  \alpha_1'(B(t,z))  z_1 z_2  \\
    &\quad + \alpha_2\big(-  z_1 z_2 + \alpha_1(\tfrac{1}{2}(1 - z_1^2))\big) \\
    & \stackrel{z_1=0}{=} - z_2^2 + \alpha_2\big(\alpha_1(\tfrac{1}{2})\big)
\end{align*}
which is negative for $z_1=0$ and $|z_2|>\sqrt{\alpha_2\big(\alpha_1(\tfrac{1}{2})\big)}$, independent of the choice of $u$. Indeed, all of those points are contained in $C_1\cap C_2$ given by
\begin{multline*}
      \big\{ (t,z)\in\R_{\ge 0}\times\R^2 \mid |z_1|<1,
     z_1 z_2 \le \alpha_1(\tfrac{1}{2}(1 - z_1^2)) \big\}.
\end{multline*}
Thus, condition~\eqref{eq:CriterionCBF} in \Cref{Def:HOCBF} is not satisfied for any choice of $\alpha_1,\alpha_2\in\mathcal{K}_\infty^e$ and hence~$B$ is not a high-order CBF.\\
We may further observe that, equipped with~$B(t,z(t))$, the system does not have a global relative degree two, as $L_{ G} L_{ F}B(t,z) = -  z_1$ is not invertible at $z_1=0$, which already violates one of the prerequisites for a HOCBF. 
Moreover, for any differentiable function $\tilde B:\R^2\to\R$ which satisfies~\eqref{eq:rel_degree_condition}, we have $\nabla_{z_2} \tilde B(z) = 0$, i.e., $\tilde B = \tilde B(z_1)$. Thus, with $\tilde B(z_1)\ge 0$ for all~$z_1\in(-1,1)$ and $\tilde B(z_1)\le 0$ otherwise, we find 
$L_{ G} L_{F} B(t,z) = \tilde B'(z_1)$ for $B(t,z) = \tilde B(z_1)$, which is zero for some ${z_1\in (-1,1)}$ by the intermediate value theorem. Therefore, the relative degree condition is violated for any possible choice of~$B$.~\hfill $\diamond$
\end{example}
The above example shows that there does not exist any barrier function candidate which is able to keep the output of the system within a desired margin and, at the same time, satisfies the definition of a HOCBF. We will resolve this drawback in the following with an alternative construction of CBFs for such cases.

\section{CBFs for higher-order systems using insights from funnel control} \label{Sec:HOCBFusingFunnelControl}

In this section, we introduce and characterize a set of CBF-based controls without knowledge about the system parameters~$f,g$, and $Q$ in~\eqref{eq:System_r_InputOoutputForm}.
\Cref{Ex:Failure} exhibits that the construction of the auxiliary variables~$\gamma_i$ in~\eqref{eq:FunctionsHOCBF} does not lead to a HOCBF according to \Cref{Def:HOCBF}. 
However, we may define vector-valued auxiliary variables~$\beta_i$ with a structure similar to~\eqref{eq:FunctionsHOCBF} and then, using these auxiliary variables, define a CBF for system~\eqref{eq:System_r_InputOoutputForm}, which necessitates a significantly different analysis.

\subsection{CBF candidate for higher-order systems}
\label{Sec:CBF_candidate_for_high_order_systems}
In this section, we use ideas from funnel control~\cite{BergIlch21} to construct auxiliary variables with a specific structure to define a CBF for higher-order systems.
The key insight is to shift the higher-order Lie-derivatives used in \Cref{Def:HOCBF} to the construction of auxiliary variables~$\beta_k$ and their respective domains.

Let $y_{\mathrm{ref}}$ and $\psi_k$, $k \in [r]$, satisfy \Cref{Ass:reference_and_funnel}. 
We assign each output derivative~$\xi_k$ to its own performance function~$\psi_k$.
Setting $\beta_0 \equiv 0$ and $\mathcal{D}_0 = \R_{\ge 0}\times\R^{rm}$, we recursively define auxiliary signals~$\beta_k: \mathcal{D}_{k} \to \R^m$, 
\begin{equation}\label{eq:AuxBeta}
    (t,\xi) \mapsto \frac{\xi_{k}-y_{\mathrm{ref}}^{(k-1)}(t)}{\psi_{k}(t)} + \frac{\beta_{k-1}(t,\xi)}{1-\|\beta_{k-1}(t,\xi)\|^2}
\end{equation}
on the domain $\mathcal{D}_{k} = \{ (t,\xi) {\in\mathcal{D}_{k-1}}\ | \ \|\beta_{k-1}(t,\xi)\| < 1 \}$. 
Since $\mathcal{D}_r \subset \mathcal{D}_{r-1} \subset \cdots \subset \mathcal{D}_1$ holds, the domains are nested; see~\Cref{fig:domain_auxvar} for a visualization of the domains.
\begin{figure}[ht]
    \centering
 \begin{tikzpicture}
\begin{axis}[
    scale=0.7,
    axis lines = middle,
    xlabel = $\xi_1$,
    ylabel = $\xi_2$,
    samples = 30,  
    ymax=5,
    ymin=-5,
    xmax=1.25,
    xmin=-1.25,
    xticklabel style={xshift=-2pt},
]

\addplot[name path=U, domain=-0.98:0.98, black, thick] 
    {1 - x/(1 - x^2)};
\addplot[name path=L, domain=-0.98:0.98, black, thick] 
    {-1 - x/(1 - x^2)};

\addplot[name path=ri, domain=-5:5, black, thick] ({1}, x);
\addplot[name path=le, domain=-5:5, black, thick] ({-1}, x);

\addplot [
    fill=white,
    draw=none
] fill between [of=U and L];

\addplot [
    pattern=north east lines,
    pattern color=black,
    draw=none
] fill between [of=U and L];

\end{axis}
\end{tikzpicture}
    \caption{Domains of the auxiliary variables~$\beta_1, \beta_2$ from~\eqref{eq:AuxBeta}
    for scalar~$\xi_1,\xi_2$, $y_{\mathrm{ref}} \equiv 0$, and~$\psi_1 = \psi_2 \equiv 1$. 
    The blank area~$-1 \le \xi_1 \le 1$ is a snapshot of~$\mathcal{D}_1$ for a fixed time~$t$, and the crosshatched area is a snapshot of~$\mathcal{D}_2$,  i.e., $| \xi_2 + \tfrac{\xi_1}{1 - \xi_1^2} | < 1$.%
    }
    \label{fig:domain_auxvar}
\end{figure}

We define the set 
\begin{equation} \label{eq:control_domain}
    {\mathcal{D} := \{(t,\xi) \in \mathcal{D}_r \, | \, \|\beta_r(t,\xi) \| < 1 \}} \subset \mathcal{D}_r
\end{equation} and obtain the inclusion ${\mathcal{D} \subset \mathcal{C}}$ for the safe set~$\mathcal{C}$ from~\eqref{eq:SafeSet}.
Thus, ensuring that $(t,{y(t),\dot y(t),\ldots,y^{(r-1)}(t)}) \in \mathcal{D}$ for all~$t \ge 0$ implies safety, i.e., achievement of the control objective.

In \Cref{Def:HOCBF}, a CBF~$B(t,x)$ is written in terms of the entire state~$x$.
Since we consider higher-order systems in input-output form~\eqref{eq:System_r_InputOoutputForm}, i.e., we assume only {availability of} the output and its derivatives, we
propose the following candidate zeroing CBF
\begin{equation} \label{eq:CBF}
    \begin{aligned}
        b : \mathcal{D}_r &\to \R , \ 
        (t,\xi)  \mapsto \tfrac{1}{2} \left(1 - \| \beta_{r}(t,\xi)\|^2 \right).  
    \end{aligned}
\end{equation}
We will show that
it defines a CBF of order one with respect to the dynamics~\eqref{eq:System_r_InputOoutputForm}, i.e., {there exists $\alpha\in \mathcal{K}_\infty^e$ such that}
        \begin{equation} \label{eq:CBF-condition}
            \begin{aligned}
            \sup_{{u\in\R^m}} \Big[ \partial_t b(t,\xi) &+ (L_F b)(t,\xi,\eta) \\
            &+ (L_G b)(t,\xi,\eta)u + \alpha(b(t,\xi)) \Big] \ge 0
            \end{aligned}
        \end{equation}
for all~$(t,\xi,\eta) \in \mathcal{D}_r \times \R^q$, {where $F, G$ are defined in~\eqref{eq:FandG}.}
\begin{rem}
    We highlight that the function~$b$ in~\eqref{eq:CBF} is independent of the internal state~$\eta$.
    However, taking Lie-derivatives along~$F$ and~$G$ leads to expressions depending  
    on~$\eta$, e.g., 
    \begin{align*}
       (L_F b)(t,\xi,\eta) = [\nabla_\xi b(t,\xi)^\top, \underbrace{\nabla_\eta b(t,\xi)^\top}_{=0} ] F(\xi,\eta).
    \end{align*}
    To emphasize that~$b$ is independent of the internal state~$\eta$, we continue to write~$b(t,\xi)$, while taking the Lie-derivative requires to formally include~$\eta$.~\hfill $\diamond$
\end{rem}
Based on~\eqref{eq:CBF-condition}, we consider the set of CBF-based controls for a fixed~$\alpha \in \mathcal{K}_\infty^e$: 
\begin{align}
\label{eq:CBFbasedControls}
    K_{\rm CBF}(t,\xi,\eta,\alpha) &:= \Big\{ u \in \R^m \ | \ \partial_t b(t,\xi) + (L_F b)(t,\xi,\eta) \nonumber
    \\ &+ (L_G b)(t,\xi,\eta) u  \ge - \alpha(b(t,\xi)) \Big\},
\end{align}
which coincides with the set~$K_{\rm HO}(t,\xi,\eta)$ in~\eqref{eq:K_HOCBF} for~${r=1}$ in~\eqref{eq:System_r}, when $b=B$, $F=\tilde F$, $G=\tilde G$, and $\alpha=\alpha_r$.

We emphasize that, although the set~$K_{\rm CBF}$ formally depends on the unobserved internal state~$\eta$, we will construct a subset of feasible controls which is independent of the internal state.
To this end, we first show that the function defined in~\eqref{eq:CBF} satisfies the CBF condition~\eqref{eq:CBF-condition}.
\begin{prop}\label[prop]{Prop:bIsOutputCBF}
    Let \Cref{Ass:reference_and_funnel} hold for $y_{\mathrm{ref}} \in Y_{\rm ref}$ and $\psi_k$, $k \in [r]$. 
    Then, for system~\eqref{eq:System_r_InputOoutputForm} satisfying \Cref{ass:ID_BIBS,Ass:g_positive}, the function~$b$ in~\eqref{eq:CBF} satisfies the first-order CBF condition~\eqref{eq:CBF-condition}. 
\end{prop}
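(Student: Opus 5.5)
The plan is to compute $L_G b$ explicitly and show that it never vanishes except where $\beta_r=0$. Away from that set, the supremum over $u$ in \eqref{eq:CBF-condition} is $+\infty$ because the expression is affine in $u$. On the set $\beta_r=0$ the condition should hold trivially for any $\alpha\in\mathcal{K}_\infty^e$.

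First observe how $b$ depends on the top derivative. The map $\beta_r$ depends on $\xi_r$ only through the first summand $(\xi_r - y_{\rm ref}^{(r-1)}(t))/\psi_r(t)$. The term $\beta_{r-1}/(1-\|\beta_{r-1}\|^2)$ involves only $\xi_1,\dots,\xi_{r-1}$, since by the recursion \eqref{eq:AuxBeta} each $\beta_k$ depends only on $\xi_1,\ldots,\xi_k$. Hence
\[
\nabla_{\xi_r}\beta_r(t,\xi)=\tfrac{1}{\psi_r(t)}I_m .
\]
The structure \eqref{eq:FandG} of $G$ has its only nonzero block $g$ in the $\xi_r$-row, so
\[
(L_G b)(t,\xi,\eta) = -\tfrac{1}{\psi_r(t)}\,\beta_r(t,\xi)^\top g(\xi,\eta).
\]
Note that $b$ is differentiable on the open set $\mathcal{D}_r$ (for almost every $t$, given the Sobolev regularity of $y_{\rm ref}$ and $\psi_k$), and $\psi_r(t)>0$ by \Cref{Ass:reference_and_funnel}.

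Next, fix $(t,\xi,\eta)\in\mathcal{D}_r\times\R^q$ and split into two cases.

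If $\beta_r(t,\xi)\neq0$, choose $u=-\lambda\beta_r(t,\xi)$ with $\lambda>0$. Then
\[
(L_Gb)\,u=\tfrac{\lambda}{\psi_r(t)}\,\langle \beta_r, g\,\beta_r\rangle,
\]
which is strictly positive by \Cref{Ass:g_positive} and tends to $+\infty$ as $\lambda\to\infty$. The remaining terms $\partial_t b+L_Fb+\alpha(b)$ are finite real numbers at this point, so the supremum is $+\infty\ge0$.

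If $\beta_r(t,\xi)=0$, then $\nabla_\xi b=-(\nabla_\xi\beta_r)^\top\beta_r=0$ and $\partial_t b=-\beta_r^\top\partial_t\beta_r=0$, so $L_Fb=0$ as well. Moreover $b=\tfrac12$, so the bracket equals $\alpha(\tfrac12)>0$ for any $u$.

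Thus \eqref{eq:CBF-condition} holds for every $\alpha\in\mathcal{K}_\infty^e$, for instance $\alpha(s)=s$. Assumption \ref{ass:ID_BIBS} is not actually needed for this pointwise statement; it matters later for the model-free subset of controls.

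The main obstacle is the bookkeeping that the nested $\beta_k$ genuinely do not depend on $\xi_r$ for $k<r$, together with the differentiability of $b$ on $\mathcal{D}_r$. The latter requires $\|\beta_{k-1}\|<1$, which is exactly what the nested domains $\mathcal{D}_k$ guarantee, so that the quotients in \eqref{eq:AuxBeta} are smooth there. A minor subtlety is that $\partial_t b$ exists only almost everywhere because $y_{\rm ref}^{(r)}$ and $\dot\psi_k$ are merely $L^\infty$. I would handle this by interpreting the condition pointwise wherever these derivatives exist, which does not affect the argument since those terms are finite.
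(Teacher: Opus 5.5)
Your argument is correct, but it takes a different route from the paper.

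The paper fixes $(t,\xi,\eta)$ and inserts the explicit, model-dependent input $u=-\psi_r(t)\,g(\xi,\eta)^{-1}[(\partial_t+L_F+I)\beta_r](t,\xi,\eta)$. This uses invertibility of $g$, which follows from \Cref{Ass:g_positive}. The input cancels the drift and gives $\partial_t b+L_Fb+(L_Gb)u=\|\beta_r\|^2$. With the linear choice $\alpha(s)=2s$, the bracket in \eqref{eq:CBF-condition} then equals exactly $1$, so no case distinction is needed.

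You instead use only the affine-in-$u$ structure. Since $L_Gb=-\beta_r^\top g/\psi_r$ and $g$ is positive definite, $L_Gb$ vanishes precisely on $\{\beta_r=0\}$. On that set $\partial_t b$ and $L_Fb$ vanish as well and $b=\tfrac12$. Your route buys a stronger conclusion: the condition holds for every $\alpha\in\mathcal{K}_\infty^e$, not just for one linear $\alpha$. It also shows that the model-free direction $-\lambda\beta_r=\lambda\psi_r(t)\nabla_{\xi_r}b(t,\xi)$ already suffices pointwise; this is exactly the direction later used in $\mathcal{U}(t,\xi)$. The paper's version, in exchange, exhibits a single explicit finite input with a uniform margin, and needs no case split or limiting argument.

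Your side remarks are both accurate. \Cref{ass:ID_BIBS} is indeed not used in this pointwise statement. Also, $\partial_t b$ exists only for almost every $t$ under the Sobolev regularity of \Cref{Ass:reference_and_funnel}; the paper glosses over this by asserting that $b$ is continuously differentiable on $\mathcal{D}_r$.
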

Before we prove \Cref{Prop:bIsOutputCBF}, we point out  
that system~\eqref{eq:System_r_InputOoutputForm} satisfying 
\Cref{Ass:g_positive} has relative degree~$r$ in the sense of \cite[Def.~3.1]{byrnes1991asymptotic}. 
The proposed auxiliary variable~$\beta_r$, on which the barrier function candidate~\eqref{eq:CBF} is based, has relative degree one w.r.t.\ the dynamics~\eqref{eq:System_r_InputOoutputForm}, which can be seen by calculating $(L_G \beta_r)(t,\xi,\eta) = g(\xi,\eta)/\psi_r(t)$.

\begin{proof}
    The function~$b$ is continuously differentiable on~$\mathcal{D}_r$.
    Fix $(t,\xi,\eta)\in\mathcal{D}_r\times\R^q$. Using~\eqref{eq:System_r_InputOoutputForm} and the functions~$F,G$ from~\eqref{eq:FandG}, we have that
    \begin{align*}
         &(L_F b)(t,\xi,\eta)= \left\langle \nabla_{\xi} b(t,\xi),  
         (\xi_2^\top \dots\ 
         \xi_r^\top\ f(\xi,\eta)^\top )^\top  
         \right\rangle \\
        & \qquad \qquad \qquad + \langle \nabla_{\eta} b(t,\xi), Q(\xi,\eta)\rangle\\
        &= \sum_{i=1}^{r-1} \langle \nabla_{\xi_i} b(t,\xi), \xi_{i+1}\rangle  + \langle \nabla_{\xi_r} b(t,\xi), f(\xi,\eta) \rangle\\
        &= -\sum_{i=1}^{r-1} \langle \big(\nabla_{\xi_i} \beta_r(t,\xi)\big) \beta_r(t,\xi), \xi_{i+1}\rangle\\
        &\quad - \langle \big(\nabla_{\xi_r} \beta_r(t,\xi)\big) \beta_r(t,\xi), f(\xi,\eta) \rangle\\
        &= -  \left\langle \beta_r(t,\xi), \big(\nabla_\xi \beta_r(t,\xi)\big)^\top \begin{pmatrix} \xi_2\\ \vdots\\ \xi_r\\ f(\xi,\eta)\end{pmatrix} + 0^\top \!\cdot\! Q(\xi,\eta) \right\rangle\\
        &= -  \langle \beta_r(t,\xi), (L_F \beta_r)(t,\xi,\eta)\rangle,
    \end{align*}
    where we used $\nabla_\eta \beta_r(t,\xi) = \nabla_\eta b(t,\xi) = 0 \in \R^{q \times m}$.
    For some~$u \in \R^m$, we have
    \begin{align*}
        \! (L_G b)(t,\xi,\eta) u &= \left\langle \nabla_{\xi} b(t,\xi), \!\begin{bmatrix} 0\\ g(\xi,\eta)\end{bmatrix} \! u\right\rangle 
        \!+\! \langle \nabla_{\eta} b(t,\xi), 0\rangle\\
        &= \langle \nabla_{\xi_r} b(t,\xi), g(\xi,\eta) u\rangle\\
        &= -\langle \big(\nabla_{\xi_r} \beta_r(t,\xi)\big) \beta_r(t,\xi), g(\xi,\eta) u\rangle \\
        &= -\langle \beta_r(t,\xi), g(\xi,\eta) u \rangle /\psi_r(t).
    \end{align*}
    Choosing the control 
    \begin{equation*}
        u = -\psi_r(t) g(\xi,\eta)^{-1} [(\partial_t + L_F + I)\beta_r](t,\xi,\eta) \in\R^m,
    \end{equation*}
    we calculate
        \begin{equation*}
        \begin{aligned}
            & \partial_t b(t,\xi) +  (L_Fb)(t,\xi,\eta) + (L_G b)(t,\xi,\eta) u \\
            &= 
            -\langle \beta_r(t,\xi), [(\partial_t + L_F)\beta_r](t,\xi,\eta) + {g(\xi,\eta) u /\psi_r(t)} \rangle \\
            & = - \langle \beta_r(t,\xi), [(\partial_t + L_F )\beta_r](t,\xi,\eta) \\
            & \qquad - [(\partial_t + L_F + I)\beta_r](t,\xi,\eta) \rangle \\
            & = \langle \beta_r(t,\xi), \beta_r(t,\xi) \rangle = \|\beta_r(t,\xi)\|^2 \\
            &> \|\beta_r(t,\xi)\|^2 - 1 = - 2 \alpha(b(t,\xi)), 
        \end{aligned}
    \end{equation*}
    where we use the function $\alpha:( s\mapsto 2 s) \in {\mathcal{K}_\infty^e}$. Hence, $b$ is a CBF candidate satisfying condition~\eqref{eq:CBF-condition}.
\end{proof}

\subsection{Characterizing a subset of feasible controls}
\label{Sec:characterizing_a_subset_of_feasible_controls}

In this section, we propose, based on the function~$b$ introduced in~\eqref{eq:CBF}, a method to characterize a model-free subset of feasible CBF-based controls.
To do so, we first present a fundamental property of the auxiliary variables~$\beta_k$ given in~\eqref{eq:AuxBeta} on the control domain~$\mathcal{D}$ given in~\eqref{eq:control_domain}. 
\begin{lem} \label[lemma]{Lem:bounds_on_derivatives}
    Consider~$b$ from~\eqref{eq:CBF}.
    For any compact subset~$ \bar{\mathcal{D}} \subset \mathcal{D}$ there exist constants~$B_i \ge 0$, $i \in [r]$, and~${T \ge 0}$ such that, for all $(t,\xi) \in \bar{\mathcal{D}}$, we have
    \begin{align*}
        \| \nabla_{\xi_i} b(t,\xi) \| \le B_i,  \quad 
        \| \partial_t b(t,\xi) \| \le T.
    \end{align*}
\end{lem}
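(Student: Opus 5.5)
The plan is to use the explicit chain-rule structure of $b$ together with continuity on the compact set, without invoking the dynamics at all. Since $b(t,\xi)=\tfrac12(1-\|\beta_r(t,\xi)\|^2)$, we have
\begin{align*}
\nabla_{\xi_i} b(t,\xi) &= -\big(\nabla_{\xi_i}\beta_r(t,\xi)\big)\beta_r(t,\xi),\\
\partial_t b(t,\xi) &= -\langle \beta_r(t,\xi), \partial_t \beta_r(t,\xi)\rangle .
\end{align*}
On $\mathcal{D}$ we have $\|\beta_r\|<1$. It therefore suffices to bound $\nabla_{\xi_i}\beta_r$ and $\partial_t\beta_r$ uniformly on $\bar{\mathcal{D}}$.

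First I will show, by induction on $k\in[r]$, that $\beta_k$ is well defined and continuously differentiable in $\xi$ on the open (relative to $t$) set $\mathcal{D}_k$. The term $(\xi_k-y_{\rm ref}^{(k-1)}(t))/\psi_k(t)$ is affine in $\xi_k$ with coefficient $1/\psi_k(t)\le 1/\inf\psi_k$. The map $v\mapsto v/(1-\|v\|^2)$ is smooth on the open unit ball. Its derivative is
\[
\frac{I}{1-\|v\|^2}+\frac{2vv^\top}{(1-\|v\|^2)^2},
\]
which is bounded on any compact subset of the open ball. Hence
\[
\nabla_{\xi}\beta_k=\frac{1}{\psi_k(t)}E_k+\Big(\tfrac{I}{1-\|\beta_{k-1}\|^2}+\tfrac{2\beta_{k-1}\beta_{k-1}^\top}{(1-\|\beta_{k-1}\|^2)^2}\Big)\nabla_\xi\beta_{k-1},
\]
where $E_k$ is the selector of $\xi_k$.

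The key point is that $\bar{\mathcal{D}}\subset\mathcal{D}\subset\mathcal{D}_r\subset\dots\subset\mathcal{D}_1$. On $\bar{\mathcal{D}}$ each $\|\beta_{k-1}\|<1$ pointwise, so I want a uniform margin $\max_{\bar{\mathcal{D}}}\|\beta_{k-1}\|\le 1-\varepsilon_k<1$. This follows from compactness and continuity, again inductively. Here lies the main subtlety: $\psi_k$ and $y_{\rm ref}^{(k-1)}$ are only $W^{1,\infty}$, resp.\ $W^{r-k+1,\infty}$, hence Lipschitz continuous in $t$ but not necessarily differentiable.

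Continuity in $(t,\xi)$, and thus the margin and the $\xi$-gradient bounds, is fine. For the gradient bounds I will take $B_i:=\sup_{\bar{\mathcal{D}}}\|\nabla_{\xi_i}\beta_r\|$, which is finite by the recursion. Each factor is bounded using $\inf\psi_k>0$ and the margins $\varepsilon_k$.

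For the $t$-derivative I will proceed analogously. We have
\[
\partial_t\beta_k=-\frac{\dot y_{\rm ref}^{(k-1)}}{\psi_k}-\frac{(\xi_k-y_{\rm ref}^{(k-1)})\dot\psi_k}{\psi_k^2}+(\text{same matrix})\,\partial_t\beta_{k-1},
\]
which holds wherever it exists, in particular for almost every $t$. Every term is bounded on $\bar{\mathcal{D}}$ because of the following facts:
\begin{itemize}
\item $\dot\psi_k$ and $y_{\rm ref}^{(k)}$ are essentially bounded by \Cref{Ass:reference_and_funnel};
\item $\xi$ ranges over a bounded set;
\item $\psi_k\ge\inf\psi_k>0$;
\item the margins $\varepsilon_k$ control the matrix factor.
\end{itemize}
This gives $T$ as the product and sum of these bounds times $\sup\|\beta_r\|\le1$.

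The main obstacle I expect is precisely this regularity issue in $t$. The statement should be read with $\partial_t b$ in the almost-everywhere/essential-supremum sense, or the proof should simply note that the bound holds wherever the derivative exists. The uniform margin away from the unit sphere is the other essential ingredient, and it relies crucially on compactness of $\bar{\mathcal{D}}$ inside the open-type set $\mathcal{D}$.
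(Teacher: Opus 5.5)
Your proposal is correct and follows essentially the same route as the paper: a uniform margin $\|\beta_k\|\le\delta_k<1$ on $\bar{\mathcal{D}}$ obtained from compactness, the chain-rule identities $\nabla_{\xi_i}b=-(\nabla_{\xi_i}\beta_r)\beta_r$ and $\partial_t b=-\langle\beta_r,\partial_t\beta_r\rangle$, and an induction over the recursions for $\nabla_{\xi_i}\beta_j$ and $\partial_t\beta_j$. Your version is slightly more careful than the paper's in two places: you keep the term $(\xi_k-y_{\rm ref}^{(k-1)})\dot\psi_k/\psi_k^2$, which needs boundedness of $\xi$ on $\bar{\mathcal{D}}$, and you note that with $\psi_k\in W^{1,\infty}$ the derivative $\partial_t b$ exists only for almost every $t$, so the bound holds wherever it exists.
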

\begin{proof} Since $\bar{\mathcal{D}} \subset \mathcal{D}$ is compact, there exist $\delta_1,\ldots,\delta_r\in (0,1)$ such that
$\|\beta_i(t,\xi)\|\le \delta_i$ for all $(t,\xi)\in \bar{\mathcal{D}}$ and~$i \in [r]$. Fix $i\in [r]$. Then  we have 
\[
    \nabla_{\xi_i} b(t,\xi) = -(\nabla_{\xi_i} \beta_r(t,\xi)) \beta_r(t,\xi).
\]
For $j=i,\ldots,r$, we may compute that
\[
\nabla_{\xi_i} \beta_j(t,\xi) = \begin{cases} -\frac{1}{\psi_j(t)} I, & i=j,\\
    -\frac{\nabla_{\xi_i} \beta_{j-1}(t,\xi)}{1-\|\beta_{j-1}(t,\xi)\|^2} & \\
    -\frac{2 \beta_{j-1}(t,\xi) \beta_{j-1}(t,\xi)^\top \nabla_{\xi_i} \beta_{j-1}(t,\xi)}{(1-\|\beta_{j-1}(t,\xi)\|^2)^2}, & i<j \end{cases}
\]
for all $(t,\xi)\in \bar{\mathcal{D}}$. 
We note that $\nabla_{\xi_i} \beta_j(t,\xi)=0$ for~${j<i}$. Since $\beta_0 = 0$, induction over~$j \in [r]$ shows that, if $\|\nabla_{\xi_i} \beta_{j-1}(t,\xi)\|\le C_{i,j-1}$ for some $C_{i,j-1}\ge 0$, $j\in[r]$, and all $(t,\xi)\in \bar{\mathcal{D}}$, then
\[
\|\nabla_{\xi_i} \beta_j(t,\xi)\| \le \begin{cases} \frac{1}{\psi_j(t)} , & i=j,\\
    \frac{C_{i,j-1}}{1-\delta_{j-1}^2} +\frac{2 C_{i,j-1}}{(1-\delta_{j-1}^2)^2}, & i<j,\\
    0, & i>j.
    \end{cases}
\]
Hence, invoking the properties of~$\psi_i$, boundedness of $\|\nabla_{\xi_i} \beta_r\|$ on $\bar{\mathcal{D}}$ follows, which shows boundedness of $\|\nabla_{\xi_i} b\|$. To show boundedness of $\|\partial_t b\|$ on $\bar{\mathcal{D}}$, observe that 
\[
\partial_t b(t,\xi)  =- (\partial_t \beta_r(t,\xi)) \beta_r(t,\xi)
\]
and, for all $i\in [r]$,
\begin{align*}
&\partial_t \beta_i(t,\xi) = 
-\tfrac{y_{\rm ref}^{(r)}(t) \psi_i(t) - y_{\rm ref}^{(i-1)}(t) \dot \psi_i(t) }{\psi_i(t)^2} \\
& + \tfrac{\partial_t \beta_{i-1}(t,\xi)}{1-\|\beta_{i-1}(t,\xi)\|^2} 
+ \tfrac{2 \beta_{i-1}(t,\xi) \beta_{i-1}(t,\xi)^\top \partial_t \beta_{i-1}(t,\xi)}{(1-\|\beta_{i-1}(t,\xi)\|^2)^2}.
\end{align*}
Hence, with a similar induction, invoking $\beta_0=0$ and the properties of $\psi_i$ and $y_{\rm ref}$, boundedness of $\|\partial_t b\|$ follows.
Defining~$B_i, T \ge 0$ appropriately yields the assertion.
\end{proof}

To derive a feedback control law based on the CBF~$b$ defined in~\eqref{eq:CBF}, 
we introduce the following set of candidate controls on the domain~$\mathcal{D}$ given in~\eqref{eq:control_domain}
\begin{equation}\label{eq:SetOfControls}
     \mathcal{U}(t,\xi) = \left. \left\{  \lambda \frac{\nabla_{\xi_r} b(t,\xi)}{b(t,\xi)} \ \right| \ \lambda \in  [\underline{\lambda}, \bar \lambda] \right\}, \ (t,\xi) \in \mathcal{D},
\end{equation}
where the constants $0 < \underline{\lambda} \le \bar \lambda$ are chosen by the user, and the function~$b$ is given in~\eqref{eq:CBF}.
Note that the set $\mathcal{U}(t,\xi)$ is well-defined, since $b(t,\xi) > 0$ for ${(t,\xi) \in \mathcal{D}}$. 
We emphasize that~$\mathcal{U}(t,\xi)$ depends only on~$t$ and~$\xi$. 
In particular, no system data~$f,g,Q$ is used to define the set~$\mathcal{U}(t,\xi)$ of candidate controls.
For a system~\eqref{eq:System_r_InputOoutputForm} with $\xi(t) = \big(y(t), \dot y(t),\ldots,y^{(r-1)}(t)\big)$ this means that evaluating the CBF-based controls requires only output (and derivative) information.
A particular control~$u \in \mathcal{U}(t,\xi)$ resembles a funnel control feedback, cf.~\cite{berger2018funnel}.

\begin{thm}\label[thm]{Thm:ControlsSubsetKCBF}
    Let system~\eqref{eq:System_r_InputOoutputForm} satisfy \Cref{ass:ID_BIBS,Ass:g_positive}.
    Let~$y_{\rm ref}$ and~$\psi_k$, $k \in [r]$, satisfy \Cref{Ass:reference_and_funnel}.
    Then, for any $\bar q > 0$ and any compact~$\bar{\mathcal{D}} \subseteq \mathcal{D}$, there exists a linear function $\alpha \in \mathcal{K}_\infty^e$ such that, for all ${(t,\xi,\eta) \in \R_{\ge 0} \times \R^{rm} \times \R^q}$, we have
    \begin{align*}
        \Big( (t,\xi) \in \bar{\mathcal{D}}  \wedge \|\eta\| \le \bar{q} \Big) 
         \implies  \mathcal{U}(t,\xi) \subseteq K_{\rm CBF}(t,\xi,\eta,\alpha),
    \end{align*}
    where~$K_{\rm CBF}(t,\xi,\eta,\alpha)$ is given in~\eqref{eq:CBFbasedControls}.
\end{thm}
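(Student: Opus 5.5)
We fix $\bar q>0$ and a compact $\bar{\mathcal{D}}\subseteq\mathcal{D}$, take an arbitrary $u=\lambda\nabla_{\xi_r}b(t,\xi)/b(t,\xi)\in\mathcal{U}(t,\xi)$ with $\lambda\in[\underline\lambda,\bar\lambda]$, and aim to show
\[
\partial_t b(t,\xi)+(L_Fb)(t,\xi,\eta)+(L_Gb)(t,\xi,\eta)u\ge -c\,b(t,\xi)
\]
for a constant $c>0$ independent of $(t,\xi,\eta)$. The linear function $\alpha(s)=cs$ then does the job. The approach is to split the left-hand side into a drift part, which we bound from below by a constant on the compact set, and an input part, which is nonnegative by \Cref{Ass:g_positive}.

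\textbf{Input term.} As computed in the proof of \Cref{Prop:bIsOutputCBF}, $(L_Gb)(t,\xi,\eta)u=\langle\nabla_{\xi_r}b(t,\xi),g(\xi,\eta)u\rangle$. Inserting $u$ gives
\[
\frac{\lambda}{b(t,\xi)}\,\langle\nabla_{\xi_r}b,\,g(\xi,\eta)\nabla_{\xi_r}b\rangle\ge0 .
\]
This holds because $b>0$ on $\mathcal{D}$, $\lambda>0$, and $g$ is pointwise positive definite. It suffices to drop this term entirely; we do not need a quantitative lower bound on $g$ over the compact set.

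\textbf{Drift term.} We have
\[
\partial_t b+L_Fb=\partial_t b+\sum_{i=1}^{r-1}\langle\nabla_{\xi_i}b,\xi_{i+1}\rangle+\langle\nabla_{\xi_r}b,f(\xi,\eta)\rangle .
\]
By \Cref{Lem:bounds_on_derivatives}, $|\partial_t b|\le T$ and $\|\nabla_{\xi_i}b\|\le B_i$ on $\bar{\mathcal{D}}$. Compactness of $\bar{\mathcal{D}}$ gives $\|\xi\|\le M$ there. Continuity of $f$ on the compact set $\{\|\xi\|\le M,\ \|\eta\|\le\bar q\}$ gives $\|f\|\le F_{\max}$. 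Hence
\[
\partial_t b+L_Fb\ge -K,\qquad K:=T+M\sum_{i<r}B_i+B_rF_{\max},
\]
uniformly on $\bar{\mathcal{D}}\times\{\|\eta\|\le\bar q\}$.

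\textbf{Converting the constant into $\alpha(b)$.} This is the only step needing care: we need $-K\ge -c\,b(t,\xi)$, which requires $b$ to be bounded away from zero on $\bar{\mathcal{D}}$. Since $\bar{\mathcal{D}}\subseteq\mathcal{D}$ is compact and $\|\beta_r\|<1$ on $\mathcal{D}$, continuity gives $\|\beta_r\|\le\delta_r<1$ on $\bar{\mathcal{D}}$; this is exactly the argument used at the start of the proof of \Cref{Lem:bounds_on_derivatives}. Therefore
\[
b\ge b_{\min}:=\tfrac12(1-\delta_r^2)>0 .
\]
Setting $c:=K/b_{\min}$ (or any $c>0$ if $K=0$) gives
\[
\partial_t b+L_Fb+L_Gb\,u\ge -K\ge -c\,b,
\]
so $u\in K_{\rm CBF}(t,\xi,\eta,\alpha)$ with the linear $\alpha(s)=cs\in\mathcal{K}_\infty^e$.

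We expect no serious obstacle. The main point is recognizing that the input term only needs nonnegativity, so $\underline\lambda$ plays no quantitative role. The price is that $\alpha$ depends on $\bar{\mathcal{D}}$ and $\bar q$, which the theorem explicitly allows. \Cref{ass:ID_BIBS} is not needed for this pointwise statement; it enters only later, to produce $\bar q$ along closed-loop trajectories.
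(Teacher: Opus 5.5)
Your proof is correct for the statement as written, but it takes a different route from the paper.

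The paper keeps the input term and uses it quantitatively. With $\underline g:=\min\langle e,g(z,q)e\rangle>0$ over $\|e\|=1$, $(t,z)\in\bar{\mathcal D}$, $\|q\|\le\bar q$, and with $\lambda\ge\underline\lambda$, it bounds $(L_Gb)u\ge \frac{\underline\lambda\,\underline g}{\bar\psi_r^2}\bigl(\frac{1}{b}-2\bigr)$. The whole left-hand side is then at least $l(b)=a/b-A$, where $a=\underline\lambda\,\underline g/\bar\psi_r^2$ and $A=A_0+2a$. Since $l$ is convex on $(0,\infty)$, it lies above its tangent at $s_0=2a/A$, namely $\tau(s)=-\frac{A^2}{4a}s$, which passes through the origin.

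Your version needs only $\langle z,gz\rangle\ge 0$ and absorbs the drift bound through $b\ge b_{\min}$. That is shorter, but it also shows how weak the statement is under the compactness hypothesis. The same argument gives $0\in K_{\rm CBF}(t,\xi,\eta,\alpha)$, and indeed every $u$ with $\langle\nabla_{\xi_r}b,g(\xi,\eta)u\rangle\ge 0$. So nothing specific to $\mathcal U$ is used, and your slope $K/b_{\min}$ blows up as $\bar{\mathcal D}$ approaches $\{\|\beta_r\|=1\}$.

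The paper's slope $A^2/(4a)$ does not involve $\min_{\bar{\mathcal D}} b$. It depends on $\bar{\mathcal D}$ only through $A_0$ and $\underline g$, which can be bounded uniformly on sets where $\|\beta_i\|\le\delta_i$ for $i<r$, even as $\|\beta_r\|\to 1$. So the paper's estimate stays informative near the boundary of the safe set, which is where a CBF condition actually matters. Moreover, the underlying inequality, in which the input term dominates like $1/b$, is exactly what Step~3 of the proof of \Cref{Thm:FunnelControl} reuses to obtain forward invariance. Your estimate could not play that role.
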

\begin{proof}
By compactness of~$\bar{\mathcal{D}} \subseteq \mathcal{D}$ there exists $\kappa > 0$ such that $\|\xi\| \le \kappa$ for all~$(t,\xi) \in \bar{\mathcal{D}}$.
By the boundedness of~$y_{\rm ref}^{(r)}$, for the functions in~\eqref{eq:System_r_InputOoutputForm} we may define the constants
    \begin{align*}
        \bar f &:= \max_{ (t,z) \in \bar{\mathcal{D}}, \, \|q\| \le \bar q } \| f(z,q) \| + \| y_{\mathrm{ref}}^{(r)}\|_\infty, \\ 
        0 < \underline{g} &:= \min_{\|e\|=1} \min_{ (t,z) \in \bar{\mathcal{D}}, \, \|q\| \le \bar q } \langle e, g(z,q) e \rangle.
    \end{align*}
    Using \Cref{Lem:bounds_on_derivatives} we obtain, with similar calculations as in the proof of \Cref{Prop:bIsOutputCBF},
\begin{align*}
            & [(\partial_t + L_F)b](t,\xi,\eta) + (L_Gb)(t,\xi,\eta) u  \\
            &\ge - \kappa \sum_{i=0}^{r-1} B_i - B_r \bar f - T - \frac{\langle \beta_r(t,\xi), g(\xi,\eta) u \rangle }{ \psi_r(t)} 
        \end{align*}
for all~$(t,\xi) \in \bar{\mathcal{D}}$.
We set $A_0 := \kappa \sum_{i=0}^{r-1} B_i + B_r \bar f + T $ and
$\bar{\psi}_r := \sup_{s \ge 0} \psi_r(s) > 0$.
Inserting some $u \in \mathcal{U}(t,\xi)$ gives for~$(t,\xi) \in \bar{\mathcal{D}}$ and some $\lambda \in  [\underline{\lambda}, \bar \lambda]$ that
\begin{align*}
    & [(\partial_t + L_F)b](t,\xi,\eta) + (L_Gb)(t,\xi,\eta) u  \\
    & \ge - A_0 - \frac{\langle \beta_r(t,\xi), g(\xi,\eta) \lambda \nabla_{\xi_r} b(t,\xi) \rangle }{ \psi_r(t) b(t,\xi)} \\
    &\ge - A_0 + 2 \frac{\underline{\lambda}\, \underline{g}}{\psi_r(t)^2} \frac{\langle \beta_r(t,\xi), \beta_r(t,\xi) \rangle}{1-\|\beta_r(t,\xi)\|^2} \\
    & \ge - A_0 + \frac{2 \underline{\lambda}\, \underline{g}}{\psi_r(t)^2} \, \frac{\|\beta_r(t,\xi)\|^2}{1-\|\beta_r(t,\xi)\|^2} \\
    & = - A_0 - \frac{2\underline{\lambda}\, \underline{g}}{\psi_r(t)^2} 
    + \frac{\underline{\lambda}\, \underline{g}}{\psi_r(t)^2} \frac{1+\|\beta_r(t,\xi)\|^2}{b(t,\xi)}\\
    &\ge - A_0 - \frac{2\underline{\lambda}\, \underline{g}}{\bar{\psi}_r^2} 
    + \frac{\underline{\lambda}\, \underline{g}}{\bar{\psi}_r^2} \frac{1}{b(t,\xi)}.
\end{align*}
Setting $A := A_0 + 2 \underline{\lambda} \underline{g} / \bar \psi_r$ and $a := \underline{\lambda} \underline{g} / \bar \psi_r$, 
the last line in the above estimate is of the form $l(s) = a/s - A$, for ${s=b > 0}$.
For $s_0=2a/A$, we have $l(s_0) = -A/2$ and $l'(s_0) = -A^2/(4a) < 0$. 
Thereby, the tangent of~$l$ at the point~$s_0$ is given by $\tau(s) = l'(s_0) (s-s_0) + l(s_0) = -s A^2/(4a)$. Since $l$ is convex it holds that $l(s) \ge \tau(s)$ for all $s>0$. 
By the fact that ${b(t,\xi)>0}$ for all $(t,\xi) \in \bar{\mathcal{D}}$, we conclude that
$(\partial_t + L_F)b + (L_Gb) u \ge \tau(b)$.
Defining ${\alpha(s)= -\tau(s)}$, we find that
$\alpha\in\mathcal{K}_\infty^e$ and ${u\in K_{\rm CBF}(t,\xi,\eta,\alpha)}$.
Therefore, 
$\mathcal{U}(t,\xi) \subseteq K_{\rm CBF}(t,\xi,\eta,\alpha)$.
\end{proof}

We emphasize that no knowledge about the system data~$f,g$ and $Q$ from system~\eqref{eq:System_r_InputOoutputForm} is used in designing the control set~$\mathcal{U}(t,\xi)$ in~\eqref{eq:SetOfControls}. 
In the following result, we characterize the closed-loop properties when applying inputs~$u\in\mathcal{U}(t,\xi)$ to a system~\eqref{eq:System_r_InputOoutputForm}.
In particular, we show that controls contained in the CBF-based control set $\mathcal{U}(t,\xi)$ render the set~$\mathcal{D} \subset \mathcal{C}$ forward invariant, that is, we show $\|y(t) - y_{\mathrm{ref}}(t)\| < \psi_1(t)$ for~$t\ge0$ in the closed-loop system.

\begin{thm} \label{Thm:FunnelControl}
    Let system~\eqref{eq:System_r_InputOoutputForm} satisfy \Cref{ass:ID_BIBS,Ass:g_positive}, and let $\psi_k$, $k \in [r]$, and $y_{\mathrm{ref}}$ satisfy \Cref{Ass:reference_and_funnel}. 
    Consider a feedback policy $\mu:\R_{\geq0}\times\R^{rm}\to\R^m$ that is measurable in the first argument, continuous in the second, and satisfies ${\mu(t,\xi)\in\mathcal{U}(t,\xi})$ for all $(t,\xi)\in \mathcal{D}$. 
    Then, system~\eqref{eq:System_r_InputOoutputForm} 
    with $(0,\xi(0)) \in \mathcal{D}$ and input ${u(t):=\mu(t,\xi(t))}$ has a globally defined maximal solution, i.e., $(\xi,\eta): [0,\infty) \to \R^{rm} \times \R^q$.
    Moreover,
    \begin{itemize}
        \item[(i)] $u\in L^\infty(\R_{\ge 0};\R^m)$,
        \item[(ii)] there exists~$\varepsilon \in (0,1)$ such that $b(t,\xi(t)) \ge \tfrac{1}{2} (1-\varepsilon^2)$ for all~${t \ge 0}$.
    \end{itemize}
    In particular, for~$\mathcal{C}$ given in~\eqref{eq:SafeSet}, statement~(ii) implies $(t,\xi(t))\in \mathcal{C}$ for all $t \ge 0$, i.e., $\|y(t) - y_{\mathrm{ref}}(t)\| \le \psi_1(t)$.
\end{thm}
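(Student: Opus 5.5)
The plan is the standard funnel-control argument: existence of a maximal solution, then a contradiction argument showing each auxiliary variable $\beta_k$ stays uniformly away from the unit sphere, and finally global existence and boundedness of $u$.

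\textbf{Existence of a maximal solution.} Extend $\mu$ to a function on $\R_{\ge0}\times\R^{rm}$, or work on the relatively open set $\mathcal{D}\times\R^q$. The right-hand side $(t,\xi,\eta)\mapsto F(\xi,\eta)+G(\xi,\eta)\mu(t,\xi)$ is measurable in $t$ and continuous in $(\xi,\eta)$. It is also locally essentially bounded, since $\mu(t,\xi)=\lambda\nabla_{\xi_r}b/b$ with $\lambda\le\bar\lambda$ and $\nabla_{\xi_r}b$ is locally bounded by \Cref{Lem:bounds_on_derivatives}. Carath\'eodory theory then yields a maximal solution $(\xi,\eta):[0,\omega)\to\R^{rm}\times\R^q$ with $(t,\xi(t))\in\mathcal{D}$ for all $t\in[0,\omega)$. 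Moreover, if $\omega<\infty$, the graph leaves every compact subset of $\mathcal{D}\times\R^q$.

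\textbf{Bounds from $\beta_{r-1}$ down to $\beta_0$.} Along the solution, write $e_k(t):=\beta_k(t,\xi(t))$, so that $\|e_k(t)\|<1$ for $k\in[r]$. Invert the recursion \eqref{eq:AuxBeta} to get
\[
\xi_k(t)=y_{\rm ref}^{(k-1)}(t)+\psi_k(t)\Bigl(e_k(t)-\tfrac{e_{k-1}(t)}{1-\|e_{k-1}(t)\|^2}\Bigr).
\]
Hence $\xi$ is bounded as soon as every $\|e_k\|$, $k\le r-1$, is uniformly bounded away from $1$. I would establish such bounds inductively for $k=1,\dots,r$. Differentiating, $\dot e_k = (\xi_{k+1}-y_{\rm ref}^{(k)})/\psi_k - \dot\psi_k(\xi_k-y_{\rm ref}^{(k-1)})/\psi_k^2 + \frac{d}{dt}\bigl(e_{k-1}/(1-\|e_{k-1}\|^2)\bigr)$. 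Substituting $\xi_{k+1}$ from the inverted recursion gives
\[
\dot e_k=\tfrac{\psi_{k+1}}{\psi_k}\Bigl(e_{k+1}-\tfrac{e_k}{1-\|e_k\|^2}\Bigr)+(\text{terms bounded by the induction hypothesis}).
\]
For $k=r-1,\dots$, the term $-\frac{\psi_{k+1}}{\psi_k}\frac{e_k}{1-\|e_k\|^2}$ dominates near $\|e_k\|=1$. Therefore $\frac{d}{dt}\tfrac12\|e_k\|^2<0$ whenever $\|e_k\|$ exceeds some explicit $\varepsilon_k<1$, which gives $\|e_k\|\le\max\{\|e_k(0)\|,\varepsilon_k\}$. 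The bounded terms involve $\|\psi_k\|_{W^{1,\infty}}$, $\|y_{\rm ref}\|_{W^{r,\infty}}$ and $\inf\psi_k>0$. They also involve the derivative of $e_{k-1}/(1-\|e_{k-1}\|^2)$, which is bounded once $\dot e_{k-1}$ is bounded, and that in turn follows from the previous step because $\dot e_{k-1}$ involves only $e_k$, bounded by $1$, and earlier quantities.

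\textbf{The top level $e_r$.} Here the input acts. Boundedness of $e_1,\dots,e_{r-1}$ away from $1$ gives $\|\xi(t)\|\le\kappa$ on $[0,\omega)$, so \Cref{ass:ID_BIBS} yields $\|\eta(t)\|\le\bar q$. The lower bound on $g$ and the bound on $f$ then hold on the relevant compact set, exactly as in \Cref{Thm:ControlsSubsetKCBF}. I would use the differential inequality from that proof. As long as $(t,\xi(t))$ lies in the compact set $\{\|e_k\|\le\varepsilon_k,\ k<r,\ \|e_r\|\le\varepsilon\}$, it gives $\dot b\ge -A+a/b$. The key point is that the constants $A_0$ and $\underline g$ depend only on the bounds for $e_1,\dots,e_{r-1}$ and on $\|\xi\|$, not on a lower bound for $b$. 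This is because $\nabla_{\xi_i} b=-(\nabla_{\xi_i}\beta_r)\beta_r$ and $\partial_t b$ involve $\beta_r$ only linearly, with $\|\beta_r\|<1$. So the estimate holds on all of $[0,\omega)$, and a standard comparison argument shows $b(t,\xi(t))\ge\min\{b(0,\xi(0)),a/A\}=:\tfrac12(1-\varepsilon^2)$.

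\textbf{Conclusion and expected obstacle.} With all $e_k$ confined to compact balls of radius $<1$ and $\eta$ bounded, the solution stays in a compact subset of $\mathcal{D}\times\R^q$, so $\omega=\infty$ and (ii) holds. Statement (i) follows because $\|u\|\le\bar\lambda\|\nabla_{\xi_r}b\|/b\le\bar\lambda B_r\cdot 2/(1-\varepsilon^2)$. The inclusion $(t,\xi(t))\in\mathcal{C}$ follows from $\mathcal{D}\subset\mathcal{C}$. The main obstacle is the intermediate induction: one must carefully verify that $\dot e_{k-1}$ stays bounded without circularity, i.e.\ that it depends only on $e_k$ (bounded by $1$) and earlier quantities. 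The delicate bookkeeping is showing that the lower bounds on $1-\|e_j\|^2$ needed at level $k$ are available from the earlier levels, and I would handle this first.
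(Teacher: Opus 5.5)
Your proposal is correct and follows essentially the same route as the paper. The ingredients match: Carath\'eodory existence on $\mathcal{D}\times\R^q$; an upward induction over $k\in[r-1]$ that bounds $\beta_k$ away from $1$ using only $\|\beta_{k+1}\|<1$ and the bounds from lower levels (the paper packages this as \Cref{Lem:ExDelta}); bounded-input bounded-state stability for $\eta$; and the estimate from the proof of \Cref{Thm:ControlsSubsetKCBF} for $\beta_r$, where your comparison argument on $b$ is the paper's contradiction argument on $\|\beta_r\|$ in different variables. Your observation that $A_0$ and $\underline{g}$ need only the bounds on $\beta_1,\dots,\beta_{r-1}$ together with $\|\beta_r\|\le 1$ is exactly what the paper's remark in Step~3 relies on. One wording fix: your heading ``from $\beta_{r-1}$ down'' should read upward from $k=1$, as your final paragraph already says.
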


To prove \Cref{Thm:FunnelControl}, we state the following auxiliary result, the proof of which is relegated to the appendix.

\begin{lem}\label[lem]{Lem:ExDelta}
    Let $\psi_k$, $k \in [r]$, satisfy \Cref{Ass:reference_and_funnel} and let $\hat{t}\geq 0$ and  $\xi=(\xi_1,\ldots,\xi_r)\in\R^{rm}$ be
    such that  $\|\beta_i(\hat{t},\xi)\| < 1$ for all $i \in [r-1]$.
    Then, for every function ${\zeta \in C^r([\hat{t},\infty);\R^m)}$ with 
    \begin{itemize}
        \item $(t,\zeta(t),\dot{\zeta}(t),\ldots,\zeta^{(r-1)}(t))\in \mathcal{D}_r$ for all $t \ge \hat{t}$,
        \item ${\zeta^{(i-1)}(\hat{t}) =\xi_i}$ for all $i\in[r]$,
        \item $\|\beta_r(t,(\zeta,\dot{\zeta},\ldots,\zeta^{(r-1)})(t))\| \le 1$ for all $t \ge \hat{t}$,
    \end{itemize}
    there exist $\delta_i \in (0,1)$ and~$\mu_i \in \R$ such that
    \begin{itemize}
        \item[(a)] $\|\beta_i(t,(\zeta,\dot{\zeta},\ldots,\zeta^{(r-1)})(t))\| \le \delta_i$ and
        \item[(b)] $\|\ddt \beta_i(t,(\zeta,\dot{\zeta},\ldots,\zeta^{(r-1)})(t))\| \le \mu_i$
    \end{itemize}
    for all $t \ge \hat{t}$ and all $i \in [r-1]$.
\end{lem}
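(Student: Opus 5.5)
We argue by backward induction on $i$, from $i=r-1$ down to $i=1$. Abbreviate $\beta_i(t):=\beta_i(t,(\zeta,\dot\zeta,\ldots,\zeta^{(r-1)})(t))$, $e_i(t):=(\zeta^{(i-1)}(t)-y_{\rm ref}^{(i-1)}(t))/\psi_i(t)$ and $\rho(x):=x/(1-\|x\|^2)$. Then~\eqref{eq:AuxBeta} reads $\beta_i=e_i+\rho(\beta_{i-1})$.

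The key observation is a first-order differential relation linking consecutive auxiliary variables. Differentiating $e_i$ gives
\[
\dot e_i=\frac{\zeta^{(i)}-y_{\rm ref}^{(i)}}{\psi_i}-\frac{\dot\psi_i}{\psi_i}e_i=\frac{\psi_{i+1}}{\psi_i}e_{i+1}-\frac{\dot\psi_i}{\psi_i}e_i .
\]
Using $e_{i+1}=\beta_{i+1}-\rho(\beta_i)$, we obtain
\begin{equation*}
\ddt\beta_i=-\frac{\psi_{i+1}}{\psi_i}\,\frac{\beta_i}{1-\|\beta_i\|^2}+\frac{\psi_{i+1}}{\psi_i}\beta_{i+1}-\frac{\dot\psi_i}{\psi_i}e_i+\ddt\rho(\beta_{i-1}).
\end{equation*}
So each $\beta_i$ is driven by a term that is strongly repulsive near $\|\beta_i\|=1$, perturbed by $\beta_{i+1}$, by $e_i$, and by the derivative of the lower-level term.

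This suggests the following order. We do not argue top-down on the time derivatives directly. Instead we work with the norm $\tfrac12\tfrac{d}{dt}\|\beta_i\|^2$ and show by induction on $i=1,\ldots,r-1$ that $\|\beta_i\|\le\delta_i<1$, using the bound $\|\beta_{i+1}\|<1$, which is available by the hypothesis for $i+1=r$. The induction on $i$ goes upward because $\rho(\beta_{i-1})$ enters $\beta_i$. To make this work, we rewrite the recursion as a funnel-type cascade. Multiplying the relation above by $\beta_i^\top$ gives
\[
\tfrac12\tfrac{d}{dt}\|\beta_i\|^2\le-\frac{\underline\psi}{\bar\psi}\frac{\|\beta_i\|^2}{1-\|\beta_i\|^2}+c\,(1+|\dot\psi_i|\,\|e_i\|)+\langle\beta_i,\ddt\rho(\beta_{i-1})\rangle .
\]
The term $\ddt\rho(\beta_{i-1})$ is controlled by the induction hypothesis (b) for $i-1$ together with (a). Indeed, on $\|\beta_{i-1}\|\le\delta_{i-1}$ the map $\rho$ is smooth with bounded derivative, so $\|\ddt\rho(\beta_{i-1})\|\le C\mu_{i-1}$. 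The same bounds give $\|e_i\|\le\|\beta_i\|+\|\rho(\beta_{i-1})\|$, which is bounded. Hence the right-hand side is negative whenever $\|\beta_i\|^2$ exceeds a threshold $\hat\delta_i<1$ that depends only on $\psi_k$, $\dot\psi_k$ bounds and $\delta_{i-1},\mu_{i-1}$. This yields $\|\beta_i(t)\|\le\delta_i:=\max\{\hat\delta_i,\|\beta_i(\hat t,\xi)\|\}<1$, and the initial value is $<1$ by assumption. The base case $i=1$ is identical, with $\beta_0=0$.

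Once (a) holds for $\beta_i$, we substitute back into the formula for $\ddt\beta_i$. Every term is then bounded: $1/(1-\|\beta_i\|^2)\le 1/(1-\delta_i^2)$, $\|\beta_{i+1}\|\le1$ (or $\le\delta_{i+1}$, and in any case $\le1$ since the point lies in $\mathcal{D}_{i+2}\subset\mathcal{D}_r$ or $i+1=r$), $\dot\psi_i\in L^\infty$, $\inf\psi_i>0$, and the lower-level derivative is bounded. This gives $\mu_i$ and (b), which closes the induction.

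The main obstacle is making the comparison argument rigorous. We will argue by contradiction: take the first time $t^*$ at which $\|\beta_i\|^2$ exceeds $\delta_i^2$, and on the last interval where it is above $\hat\delta_i^2$ the derivative is negative, which gives a contradiction. A further care point is that $\psi_k\in W^{1,\infty}$ only gives a.e. differentiability, so the estimates hold for almost all $t$ with absolutely continuous $\|\beta_i\|^2$. Finally, $\hat\delta_i$ must be chosen with $\beta_{i+1}$ bounded by $1$ uniformly, so the induction on (a) uses only the outer bound and not (a) for $i+1$.
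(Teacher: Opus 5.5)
Your proposal is correct and is essentially the paper's proof. Both argue by upward induction on $i$ using $\dot\beta_i=\frac{\psi_{i+1}}{\psi_i}(\beta_{i+1}-\rho(\beta_i))-\frac{\dot\psi_i}{\psi_i}e_i+\frac{d}{dt}\rho(\beta_{i-1})$, bound $\beta_{i+1}$ only by $1$, control $\frac{d}{dt}\rho(\beta_{i-1})$ by the previous step, and conclude with a last-crossing-time contradiction. Two small fixes: delete your opening claim of a backward induction, since the argument you actually give runs upward from $i=1$, and reverse the inclusion to $\mathcal{D}_r\subset\mathcal{D}_{i+2}$.
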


With \Cref{Lem:ExDelta} at hand, we are in the position to provide the following proof of \Cref{Thm:FunnelControl}.
\begin{proof}
Consider the closed-loop system~\eqref{eq:System_r_InputOoutputForm} with input ${u(t)=\mu(t,\xi(t))}$, which is defined on the domain
    \[
        \Omega:= \{ (t,\xi,\eta)\ |\ (t,\xi)\in\mathcal{D} \},
    \]
    that is, $\|\beta_r(t,\xi)\| < 1$ on~$\Omega$.
    Existence of a local maximal solution ${(\xi,\eta): [0,\omega) \to \R^{rm+q}}$, $\omega \in (0,\infty]$, follows from~\cite[Thm.~5]{IlchRyan02b}. Furthermore, the closure of the graph of~$(\xi,\eta)$ is not a compact subset of~$\Omega$. The remainder of the proof proceeds in steps.

\emph{Step 1}: We show that $\xi$ is bounded in $[0,\omega)$. With some abuse of notation, for ${t \in [0,\omega)}$ we write ${\beta_i(t)\coloneqq \beta_i(t,\xi(t))}$ and define the auxiliary functions~$\chi(s) = 1/(1-s)$ and $\kappa_i(t)
    \coloneqq  \chi(\|\beta_i(t)\|^2) \beta_i(t)$, $i\in[r-1]$. 
    For better legibility, we omit the time dependency of functions in the following.  For~${i \in [r-1]}$, each of the auxiliary signals~$\beta_i$ from~\eqref{eq:AuxBeta} satisfies on $[0,\omega)$
    \begin{equation}
    \begin{aligned}  \label{eq:dot_beta_i}
        \dot \beta_i 
        &= \tfrac{-\dot{\psi}_i}{\psi_i^2}(\xi_{i} - y_{\rm ref}^{(i-1)})+\tfrac{\xi_{i} - y_{\rm ref}^{(i-1)}}{\psi_i} + \dot \kappa_{i-1}\\
        &= \tfrac{-\dot \psi_i}{\psi_i} ( \beta_i -\kappa_{i-1} ) 
        + \tfrac{\psi_{i+1}}{\psi_{i}}\rbl \beta_{i+1}  - \kappa_i \rbr
        + \dot \kappa_{i-1} .
    \end{aligned}
    \end{equation}
    \Cref{Lem:ExDelta} implies existence of~$\delta_i \in (0,1)$ such that $\|\beta_i(t)\| \le \delta_i$, and constants~$K_i, \bar K_i \ge 0$, ${i \in [r-1]}$, such that $\|\kappa_i(t)\| \le K_i$ and $\|\ddt \kappa_i(t)\| \le \bar K_i$ for ${t \in [0,\omega)}$. Since $\beta_i = (\xi_i - y_{\rm ref}^{(i-1)})/\psi_i + \kappa_{i-1}$ and we have that $ \beta_i, \psi_i, y_{\rm ref}^{(i-1)}, \kappa_{i-1}$ are bounded, it follows that $\xi_i$ is bounded for $i=1,\ldots,r$, whence boundedness of $\xi$ on $[0,\omega)$.

\emph{Step 2}: We show that $\eta$ is bounded on $[0,\omega)$. Note that, for any $\zeta \in C(\R_{\ge 0};\R^{rm}) \cap L^\infty(\R_{\ge0};\R^{rm})$, it follows from \Cref{ass:ID_BIBS} that the global solution $\eta(\cdot;0,\eta(0),\zeta)$ exists. Then, again by \Cref{ass:ID_BIBS}, there exists~${\bar q>0}$ such that  $\|\eta(\cdot;0,\eta(0),\zeta)\|_\infty \le \bar q$ for any such~$\zeta$ with $\|\zeta\|_\infty\le \|\xi\|_\infty$, showing boundedness of the internal state~$\eta$.

\emph{Step 3}: We show statement~(ii) with~$\varepsilon \in (0,1)$ determined below. Observe that using the constants $\delta_i$ from \Cref{Lem:ExDelta}, the bounds in \Cref{Lem:bounds_on_derivatives} can also be derived for the set
\[
    \bar{\mathcal{D}} = \setdef{(t,\zeta)\in\R_{\ge 0}\times\R^{rm}}{\|\zeta\|\le \|\xi\|_\infty}\subseteq \mathcal{D}.
\]
Based on these and $\bar q>0$ from \textit{Step~2}, the constants~$\underline{g}, \overline{f}, \bar{\psi}_r, A_0$ can be defined as in the proof of \Cref{Thm:ControlsSubsetKCBF}. Now, we set
    \begin{align*}
        \hat \varepsilon &\in (0,1) \ \  \text{s.t.} \ \ \frac{\hat \varepsilon^2}{1- \hat \varepsilon^2} \ge \frac{A_0 \bar{\psi}_r^2}{ 2  \underline{\lambda}\, \underline{g}}, \\
        \varepsilon &:= \max \left\{ \hat \varepsilon, \|\beta_r(0)\|\right\} < 1,
    \end{align*}
and show that $\|\beta_r(t)\| \le \varepsilon $ for all~$t \ge 0$. By construction of~$\varepsilon$,
    we have $\|\beta_r(0)\| \le \varepsilon $.
    Seeking a contradiction, suppose the existence of~$t^* \in (0,\omega)$ such that ${\epsilon < \|\beta_r(t^*)\| < 1}$.
    Continuity implies the existence of~$t_* := \max\{ t \in [0,t^*) \, | \, \|\beta_r(t)\| = \varepsilon  \}$.
    Therefore, we have $\|\beta_r(t)\|\ge \varepsilon$ and hence $\|\beta_r(t)\|^2/(1 - \|\beta_r(t)\|^2) \ge \varepsilon^2/(1- \varepsilon^2)$ for all $t \in [t_*, t^*]$.
    On $[t_*,t^*]$, we calculate
    \begin{equation*}
        \begin{aligned}
            \tfrac{1}{2}\tfrac{\text{d}}{\text{d} t}  \|\beta_r\|^2  & = \langle \beta_r, \dot \beta_r \rangle 
            =\langle \beta_r, \partial_t \beta_r + (\nabla_\xi \beta_r)^\top \dot \xi\rangle\\
            &= \Big\langle \beta_r, \partial_t \beta_r + \sum_{i=1}^{r-1} (\nabla_{\xi_i} \beta_r)^\top \xi_{i+1} \\ 
            &\quad + (\nabla_{\xi_r} \beta_r)^\top \big(f(\xi,\eta) + g(\xi,\eta) u\big) \Big\rangle\\   
            &= -\big(\partial_t b + L_F b + (L_G b)u \big),
            \end{aligned}
    \end{equation*}
    where the last equality follows with the same calculations as in the proof of \Cref{Prop:bIsOutputCBF}. Then, with the same estimates  as in the proof of \Cref{Thm:ControlsSubsetKCBF} we find that
    \begin{equation*}
        \begin{aligned}
            \tfrac{1}{2}\tfrac{\text{d}}{\text{d} t}  \|\beta_r\|^2  & \le A_0 - \frac{2\underline{\lambda}\, \underline{g}}{\bar{\psi}_r^2}  \frac{\|\beta_r\|^2}{1-\|\beta_r\|^2} \\
             &\le A_0 - \frac{2\underline{\lambda}\, \underline{g}}{\bar{\psi}_r^2}
             \frac{\varepsilon^2}{1-\varepsilon^2} \le 0.
        \end{aligned}
    \end{equation*}
    Then, the contradiction $\varepsilon  < \| \beta_r(t^*)\| \le \|\beta_r(t_*)\| = \varepsilon $ arises upon integration.

    \emph{Step 4}: It follows from \textit{Step~3} that $\omega = \infty$, as otherwise the graph of the solution~$(\xi,\eta)$ evolves within a compact subset of~$\Omega$. This, in combination with \textit{Step~3}, implies that ${b(t,\xi(t)) \ge \tfrac{1}{2}(1-\varepsilon^2) > 0}$ for all $t \ge 0$.
    Moreover, we may estimate the maximal control input by 
    \begin{align*}
        \|u(t)\| &= \lambda \frac{\nabla_{\xi_r} b(t,\xi(t))}{b(t,\xi(t))} = -\frac{2\lambda}{\psi_r(t)} \frac{\beta_r(t)}{1-\|\beta_r(t)\|^2} \\
        & \le \frac{2 {\overline{\lambda}}}{\inf_{s\ge 0} \psi_r(s)} \frac{\varepsilon}{1-\varepsilon^2}
    \end{align*} 
    for all $t \ge 0$, i.e, $u \in L^\infty(\R_{\ge 0};\R^m)$.
    \end{proof}

We highlight that, in virtue of \Cref{Lem:ExDelta}, it suffices that the feedback is chosen such that~$\|\beta_r(t,\xi(t))\| \le 1$ for all~$t \ge 0$ in the closed-loop system in order to ensure that the variables~$\beta_k$ are well-defined on $\mathcal{D}_{k-1}$, $k \in [r]$, and to achieve the control objective, namely forward invariance of the safe set~$\mathcal{C}$ given in~\eqref{eq:SafeSet}.
\Cref{Thm:FunnelControl} summarizes this observation for the closed-loop behavior of a system~\eqref{eq:System_r_InputOoutputForm} with input $u \in \mathcal{U}(t,\xi)$.

\subsection{Discussion - model-free CBF}
The proposed high-order CBF~\eqref{eq:CBF} is constructed in a \emph{model-free} way by using insights from funnel control. 
In particular, the CBF in~\eqref{eq:CBF} and the set of safe control inputs~$\mathcal{U}(t,\xi)$ in~\eqref{eq:SetOfControls} do not rely on knowledge of the
model equations~\eqref{eq:System_r_InputOoutputForm}, but only on the  
structural \Cref{ass:ID_BIBS,Ass:g_positive}. 
In contrast, standard CBFs~\cite{ames2016control} rely on (accurate) model knowledge to define a suitable CBF and characterize the set of safe inputs $K_{\rm CBF}(t,x)$ in~\eqref{eq:CBFbasedControls}.
We note that ``model-free'' CBFs are also proposed in~\cite{molnar2021model} and~\cite[Sec.~5]{cohen2024safety}, which are particularly suitable for fully actuated mechanical systems of the form
\begin{align} \label{eq:FARS}
 \!\! D(y(t))\ddot{y}(t)=C(y(t),\dot{y}(t))\dot{y}(t)(t)+G(y(t))+Bu(t)
\end{align}
with position $y(t)\in\R^m$, velocity $\dot{y}(t)\in\R^m$ and control input $u(t)\in\R^m$ at time $t\ge 0$. Furthermore, ${D:\R^m\to\R^{m\times m}}$ is the positive definite position-dependent mass matrix, $C:\R^m\times\R^m\to\R^{m\times m}$ the centrifugal and Coriolis force matrix, $G:\R^m\to\R^m$ the gravity vector and $B\in\R^{m\times m}$ the positive definite input distribution matrix. 
To satisfy a desired constraint $h(y)\geq 0$, the methods from~\cite{molnar2021model} and~\cite{cohen2024safety} choose a desired velocity $v_{\mathrm{d}}$, e.g., based on the gradient of the constraint $h$, and then use a velocity controller to ensure an exponential decrease in the following 
Lyapunov-like function for the velocity error 
$V(y,\dot{y},v_{\mathrm{d}}(t))={(\dot{y}-v_{\mathrm{d}}(t))^\top D(y)(\dot{y}-v_{\mathrm{d}}(t))}$.
If the velocity controller is sufficiently fast, then
\begin{align}
\label{eq:CBF_FARS}
{b(t,y,\dot y) :=} h(y)-\frac{1}{2\mu}V(y,\dot{y},v_{\mathrm{d}}(t))\geq 0
\end{align}
defines a CBF for the full dynamics with a tunable weight~${\mu>0}$. 
Clearly, the method we propose is also 
applicable to this problem class (no internal dynamics, relative degree~$r=2$, and positive definite input distribution matrix). In fact, for $e=y-y_{\rm ref}$, $v_{\mathrm{d}} = \dot y_{\mathrm{ref}} $ and constant~$\psi_1(t) = \psi_2(t) = 1$ (to ease notation), we could interpret the term  
\[
\|\beta_2(t,y(t),\dot y(t)) \| = \| \dot e(t) + e(t)/(1-\|e(t)\|^2)\|
\] 
in the CBF~\eqref{eq:CBF} as a term tracking a desired velocity. 
A key difference is that we can flexibly deal with time-varying reference signals and additionally optimize over the applied input using the scaling parameter~$\lambda$, while~\cite{molnar2021model} can only indirectly change the velocity reference and requires suitable tuning. 
More importantly, the CBF~\eqref{eq:CBF_FARS} cannot be evaluated without model knowledge and the method is only applicable to a particular class of mechanical systems. 
In contrast, the proposed CBF~\eqref{eq:CBF} depends only on the output (including derivatives) and the funnel functions~$\psi_i$, which can be evaluated and utilized without  knowing the dynamics or a Lyapunov function. 
Furthermore, our method is applicable to higher-order systems with non-trivial internal dynamics. 

\section{Numerical example} \label{Sec:numerical_example}
In the following, we demonstrate the applicability and benefits of the proposed method by means of the example of a seven degree of freedom robotic manipulator. The code to reproduce the results, including the definition of all parameters is available online:\\
{\url{https://github.com/KohlerJohannes/Funnel_CBF}}
\begin{figure}
    \centering
    \includegraphics[width=0.45\linewidth]{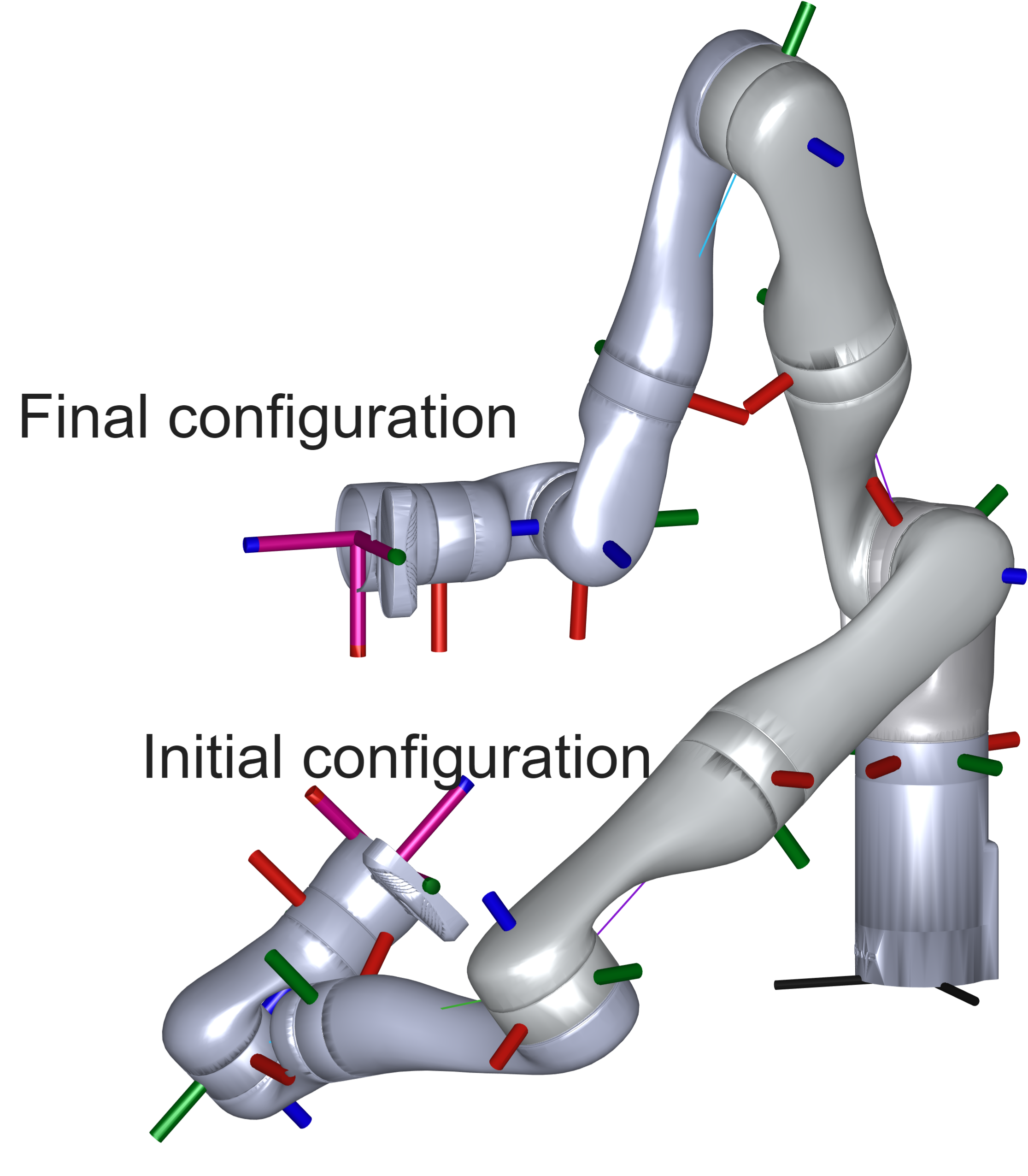}
    \caption{Initial and target configurations of the Kinova Gen3 manipulator used in simulations.}
    \label{fig:robot}
\end{figure}
 
\textit{Setup:} 
We consider a Kinova Gen3 robotic manipulator (cf. Fig.~\ref{fig:robot}) with the dynamics supplied by the Robotics Systems Toolbox on Matlab~\cite{corke2017robot}. 
The robot dynamics are characterized by a nonlinear ODE of the form~\eqref{eq:System_r_InputOoutputForm}. The system has $m=7$ outputs and control inputs, corresponding to the measured configuration angle and the applied torque. The system has a well-defined relative degree of $r=2$. 
\Cref{ass:ID_BIBS} is trivially satisfied as the system lacks internal dynamics, i.e., $q=0$. 
\Cref{Ass:g_positive} holds due to the positive definiteness of the mass matrix of the manipulator dynamics. 
We do not provide the exact system dynamics here, which is intended to illustrate that we do not need them in order to define the CBF~\eqref{eq:CBF} and implement a feedback based on it; the aforementioned structural knowledge is sufficient. 

\textit{Control goal:}
We seek tracking of a smooth reference trajectory $y_{\mathrm{ref}}$ moving the robot from a starting configuration to a desired target configuration, which are visualized in~\Cref{fig:robot}. 
The primary control objective is to stay within the error bound~\eqref{eq:ErrorGuarantee}. 
The funnel size $\psi$ should account for the distance between the planned trajectory and obstacles, see, e.g., \cite{wullt2025probabilistic} for efficient computations of such a safe radius in complex robot geometries.
For the following simulations, we set $\psi_1(t)=10^{-2}$ and $\psi_2(t)=1$ in~\eqref{eq:AuxBeta}.  

\textit{Secondary goal and CBF Implementation:}
To highlight the flexibility of the CBF-based approach, we demonstrate how a secondary objective can be incorporated without compromising safety. 
Specifically, we define a desired torque reference $u_{\mathrm{ref}}(t)$ consisting of a stationary torque with superimposed oscillations. 
These oscillations make the torque visually distinguishable in the simulation and can aid in identifying uncertain system dynamics. 
We implement the controller $u(t)=\lambda(t) \frac{\nabla_{\xi_r} b(t,\xi(t))}{b(t,\xi(t))}$ (cf.~\eqref{eq:SetOfControls}). We determine $\lambda(t)\in[\underline{\lambda},\overline{\lambda}]=[10^{-2},10^2]$ by 
minimizing $\|u(t)-u_{\mathrm{ref}}(t)\|^2$ every $5~[ms]$, yielding a piece-wise constant $\lambda(t)$. This computation of $\lambda(t)$ admits a closed-form expression, which only depends on the input reference $u_{\mathrm{ref}}(t)$ and the measured output errors with derivatives.

\begin{figure}[t!]
    \centering
    \includegraphics[width=0.45\textwidth]{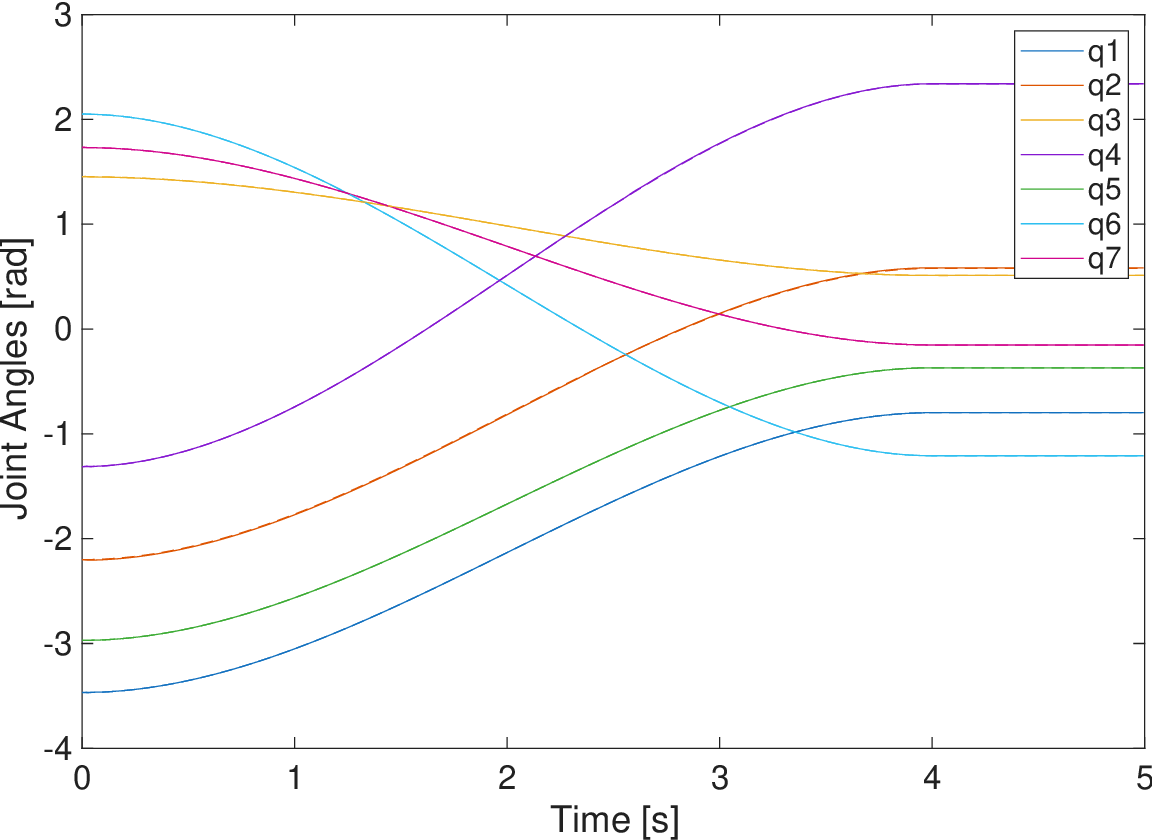}
    \includegraphics[width=0.45\textwidth]{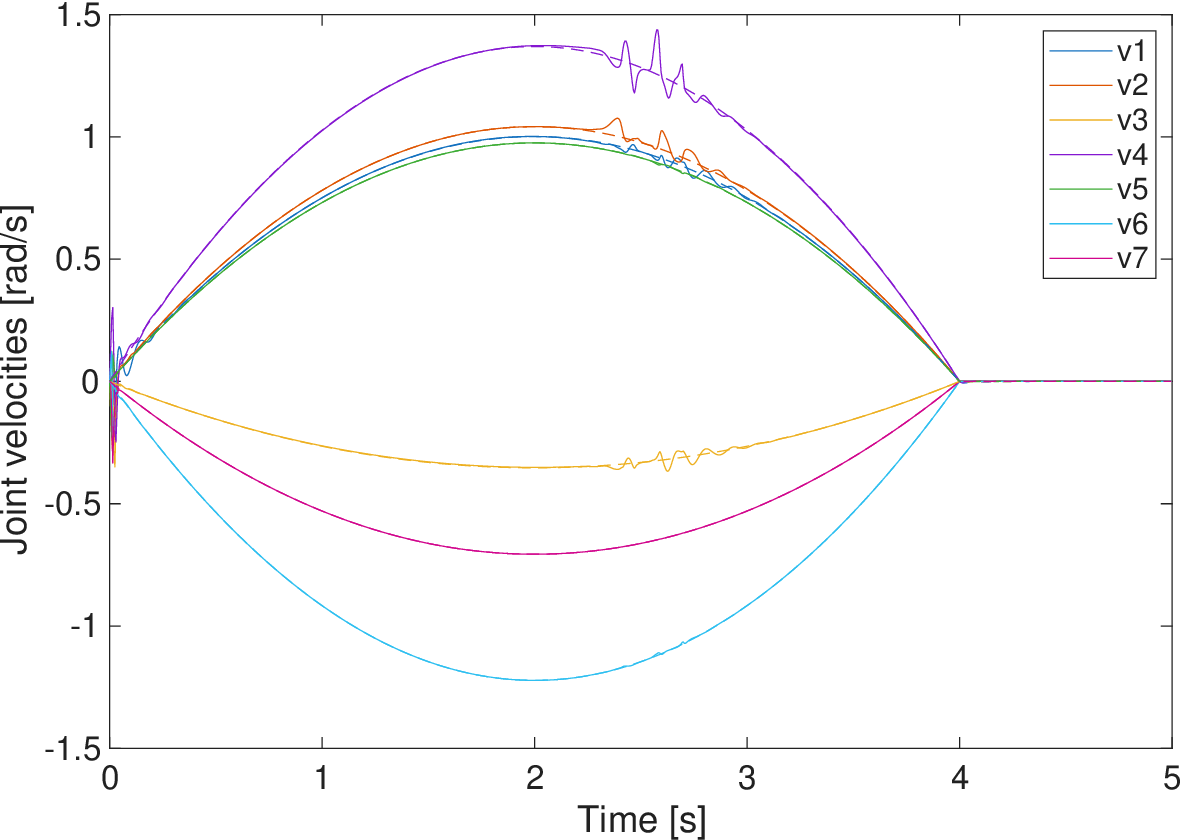}
    \includegraphics[width=0.45\textwidth]{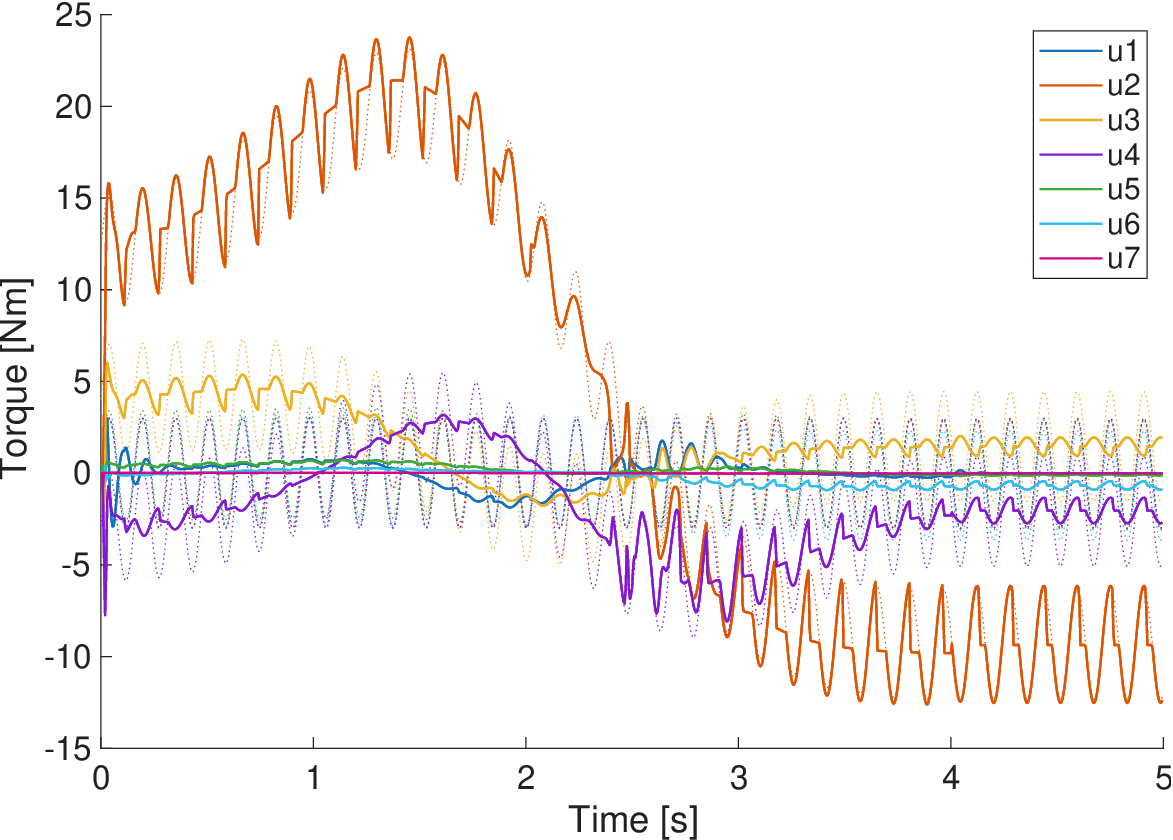}
    \caption{Simulation results of the proposed controller.
    Top: Configuration $y$. Middle: Configuration velocity $\dot y$. Bottom: Applied torque $u$. The trajectories are shown as solid lines and the references are dotted.}
    \label{fig:closed_loop}
\end{figure}

\textit{Results:}
Simulation results are shown in~\Cref{fig:closed_loop}. 
The proposed controller keeps the manipulator output within the prescribed performance funnel throughout the trajectory, demonstrating that the primary safety objective is satisfied. The applied torque closely follows the reference $u_{\mathrm{ref}}(t)$ whenever possible, particularly when the system is far from the funnel boundary. This behavior is especially visible in the oscillations of the torque.

\section{Conclusion}
This article addressed the problem of using control barrier functions (CBFs) in a model-free fashion for nonlinear systems of arbitrary  relative degree. 
In particular, our goal is to enforce that a system output remains within a prescribed funnel around a desired reference while providing flexibility to optimize inputs for arbitrary user-chosen criteria.
We provided a simple example showing that the existing notion of high-order CBFs is not suitable to address this problem. 
Instead, we utilized techniques from funnel control to characterize a subset of the controls satisfying a CBF condition without requiring a dynamic model or state measurement. 
A numerical example with a seven degree of freedom robotic manipulator shows that the proposed approach is applicable to systems with higher relative degree and unknown dynamics, enforcing the desired bounds.

\bibliographystyle{elsarticle-harv}
\bibliography{references}

\section*{Appendix}
\begin{proof}[Proof of \Cref{Lem:ExDelta}.]
    We adapt 
    the proof of~\cite[Lem.~2.2]{lanza2024sampled} to the current setting.
    For $i \in [r-1]$, we define ${\sigma_i:=\inf_{s\geq \hat t}\|\tfrac{\psi_{i+1}(s)}{\psi_{i}(s)}\|}$.
    Setting~${\delta_0:=0}$, $\bar{\kappa}_0:=0$ and utilizing the 
    the short hand notation $\chi(s)=\tfrac{1}{1-s}$, we define successively for $i \in [r-1]$
    \begin{align*}
    &\hat \delta_i \!\in (0,1)  \text{ s.t. } \\
        &\chi(\hat \delta_i^2) \hat \delta_i\geq  \!\!\tfrac{1}{\sigma_i}\!\rbl\SNorm{\tfrac{\dot{\psi}_i}{\psi_i}}\!\!\!\!\! ( 1 + \chi(\delta_{i-1}^2) \delta_{i-1})+ \SNorm{\tfrac{\psi_{i+1}}{\psi_{i}}}\!\!\!\!\!   \!+\!  \bar \kappa_{i-1}\!\rbr\!\!, \\
        &\delta_i \!\coloneqq\!  \max \{ \| \beta_i(\hat{t},\xi)\|,  \hat \delta_i\} < 1,\\
        &\mu_i \! \coloneqq\!  \SNorm{\tfrac{\dot{\psi}_i}{\psi_i}} \!\!\!\!\!\!( 1  \!+\!  \chi(\delta_{i-1}^2) \delta_{i-1} ) \! + \!\!\SNorm{\tfrac{\psi_{i+1}}{\psi_{i}}}\!\!\!\!\!   \rbl 1 \!+\!  \chi(\delta_i^2) \delta_i\rbr \!+\! \bar \kappa_{i-1},\\
        &\bar \kappa_i \! \coloneqq\!  2 \chi(\delta_i^2)^2 \delta_i^2 \mu_i + \chi(\delta_i^2) \mu_i.
    \end{align*}
    Let $\zeta \in C^r([\hat{t},\infty);\R^m)$ be a function with
    ${\zeta^{(i-1)}(\hat{t}) =\xi_i}$ for all $i\in[r]$. With some abuse of notation, we write ${\beta_i(t)\coloneqq \beta_i(t,(\zeta,\dot{\zeta},\ldots,\zeta^{(r-1)})(t))}$ and define the auxiliary functions $\kappa_i(t)
    \coloneqq  \chi(\|\beta_i(t)\|^2) \beta_i(t)$. 
    By assumption, $\|\beta_r(t)\|\le 1$ and ${(t,\zeta(t),\dot{\zeta}(t),\ldots,\zeta^{(r-1)}(t))\in \mathcal{D}_r}$ for all $t \ge \hat{t}$, thus $\|\beta_i(t)\| <1$ for all $i\in[r-1]$.
    We show that~(a) holds, that is $\|\beta_i(t)\|^2 \le \delta_i$ for all $t\geq\hat{t}$ and all $i \in [r-1]$. Seeking a contradiction, assume that there exist $j \in [r-1]$ and ${t^\star > \hat{t}}$ such that $\|\beta_j(t^\star)\|^2 > \delta_j$. W.l.o.g. we assume that this is the smallest index~$j$ with this property, that is, in particular, $\|\beta_{j-1}(t)\|\le \delta_{j-1}$ for all $t\in [\hat t,t^\star]$ and hence $\|\kappa_{j-1}(t)\|\le \chi(\delta_{j-1}^2) \delta_{j-1}$ by monotonicity of $\chi(\cdot)$.
    For better legibility, we omit the time dependency of functions in the following.
    For~${i \in [r-1]}$, each of the auxiliary signals~$\beta_i$ from~\eqref{eq:AuxBeta} satisfies~\eqref{eq:dot_beta_i} with $\xi_i = \zeta^{(i)}$
    on $[\hat t,t^\star]$. Additionally, observe that
        \begin{align*}
        \dot \kappa_i &= 2 \chi(\| \beta_i\|^2)^2 \langle \beta_i, \dot \beta_i \rangle \beta_i + \chi(\| \beta_i\|^2) \dot \beta_i.
    \end{align*}
    Then, by minimality of~$j$, we find that
    \begin{align*}
        \|\dot \kappa_{j-1}\| &\le 2 \chi(\delta_{j-1}^2)^2 \delta_{j-1}^2 \|\dot \beta_{j-1}\| + \chi(\delta_{j-1}^2) \|\dot \beta_{j-1}\|,\\
       \|\dot \beta_{j-1}\| &\le \SNorm{\tfrac{\dot{\psi}_{j-1}}{\psi_{j-1}}} ( 1 +  \chi(\delta_{j-2}^2) \delta_{j-2} ) \\
       &\quad + \SNorm{\tfrac{\psi_{j}}{\psi_{j-1}}}\rbl 1 +  \chi(\delta_{j-1}^2) \delta_{j-1}\rbr + \|\dot\kappa_{j-2}\|.
    \end{align*}
    Induction over~$j$ shows that $\|\dot \beta_{j-1}\|\le \mu_{j-1}$ and $\|\dot\kappa_{j-1}\|\le \bar \kappa_{j-1}$.
Invoking the assumptions, we have that 
\begin{align*}
    \| \beta_j(\hat t) \| & = \|\beta_j(\hat t,(\zeta,\dot{\zeta},\ldots,\zeta^{(r-1)})(\hat t))\| \le 
    \delta_j
\end{align*}
by definition of $\delta_i$, and together with continuity of the involved functions, 
${t_\star \coloneqq  \max \{ t \in [\hat t,t^\star) \ | \ \| \beta_j(t) \| = \delta_j} \}$ is well-defined. Therefore, $\|\beta_j\|\ge \delta_j$ on $[t_\star,t^\star]$, thus
    \[
        \langle \beta_j,\kappa_j\rangle = \chi(\|\beta_j\|^2) \|\beta_j\|^2 \ge  \|\beta_j\|\, \chi(\delta_j^2) \delta_j
    \]
on $[t_\star,t^\star]$. Utilizing~\eqref{eq:dot_beta_i}, we calculate on $[t_\star,t^\star]$
    \begin{align*}
        & \tfrac{1}{2} \dd{t}\Norm{\beta_j}^2 \\
         &=\al \beta_j, \tfrac{-\dot \psi_j}{\psi_j} ( \beta_j \!-\! \kappa_{j-1} ) \!+\! \tfrac{\psi_{j+1}}{\psi_{j}}\rbl\beta_{j+1}  \!-\! {\kappa_j}\rbr \!+\! \dot \kappa_{j-1}  \ar\\
         & \le  \| \beta_j \|  \Big( \SNorm{\tfrac{\dot \psi_j}{\psi_j}} \! ( 1 + \chi(\delta_{j-1}^2) \delta_{j-1}) 
         +  \SNorm{\tfrac{\psi_{j+1}}{\psi_{j}} } \!+  \bar \kappa_{j-1} \Big) \\
         & \qquad - \|\beta_j\|\, \sigma_j\chi(\delta_j^2) \delta_j
        \le  0,
    \end{align*}
    where the last inequality follows from the definition of $\delta_i$ and choice of $\hat \delta_i$. 
    Hence, the contradiction 
    ${\delta_j < \| \beta_j(t^\star)\|^2 \leq \| \beta_j(t_\star)\|^2 = \delta_j}$
    arises after integration. 
    Thus, $\| \beta_i(t) \| \le \delta_i$
    and $\|\dot \beta_i(t)\| \le \mu_i$ for all $i \in [r-1]$ and all~$t \geq 0$, showing~(a) and~(b).
    \end{proof}
\end{document}